\newtheorem{thm}{Theorem}
\newtheorem{remark}{Remark}
\newcommand{\EE}{\mathbb{E}}
\def\1{{\mathbf 1}}  % indicator
\begin{document}

% paper title
%\title{Relation between the statistical properties of the communication topology and the performance metrics of a multi-agent distributed subgradient method under random communication patterns}
\title{Cross-Layer Design of Wireless Multihop Networks over Stochastic Channels with Time-Varying Statistics}

% author names and affiliations
% use a multiple column layout for up to three different
% affiliations

% author names and affiliations
\author{\IEEEauthorblockN{Eleni~Stai$^{1}$, Michail~Loulakis$^{2}$,
Symeon~Papavassiliou$^{1}$}\\ \IEEEauthorblockA{$^{1}$ School of
Electrical \& Computer Engineering\\ $^{2}$ School of Applied
Mathematical \& Physical Sciences\\ National Technical University of
Athens, Athens, Zografou, 15780, Greece.\\Emails:
estai@netmode.ntua.gr, loulakis@math.ntua.gr,
papavass@mail.ntua.gr}}

%% The paper headers
%\markboth{IEEE Transactions on Wireless Communications,~Vol.~X, No.~Y, Month~ZZZZ}%
%{Stai \MakeLowercase{\textit{et al.}}: Cross-Layer Design of
%Wireless Multihop Networks over Stochastic Channels with
%Time-Varying Statistics}
%% <-this % stops a space

% avoiding spaces at the end of the author lines is not a problem with
% conference papers because we don't use \thanks or \IEEEmembership
% for over three affiliations, or if they all won't fit within the width
% of the page, use this alternative format:
%
%\author{\authorblockN{Michael Shell\authorrefmark{1},
%Homer Simpson\authorrefmark{2},
%James Kirk\authorrefmark{3},
%Montgomery Scott\authorrefmark{3} and
%Eldon Tyrell\authorrefmark{4}}
%\authorblockA{\authorrefmark{1}School of Electrical and Computer Engineering\\
%Georgia Institute of Technology,
%Atlanta, Georgia 30332--0250\\ Email: mshell@ece.gatech.edu}
%\authorblockA{\authorrefmark{2}Twentieth Century Fox, Springfield, USA\\
%Email: homer@thesimpsons.com}
%\authorblockA{\authorrefmark{3}Starfleet Academy, San Francisco, California 96678-2391\\
%Telephone: (800) 555--1212, Fax: (888) 555--1212}
%\authorblockA{\authorrefmark{4}Tyrell Inc., 123 Replicant Street, Los Angeles, California 90210--4321}}

% make the title area
\IEEEcompsoctitleabstractindextext{
\begin{abstract}
Network Utility Maximization is often applied for the cross-layer
design of wireless networks considering known wireless channels.
However, realistic wireless channel capacities are stochastic
bearing time-varying statistics, necessitating the redesign and
solution of NUM problems to capture such effects. Based on NUM
theory we develop a framework for scheduling, routing, congestion
and power control in wireless multihop networks that considers
stochastic Long or Short Term Fading wireless channels.
Specifically, the wireless channel is modeled via stochastic
differential equations alleviating several assumptions that exist in
state-of-the-art channel modeling within the NUM framework such as
the finite number of states or the stationarity. Our consideration
of wireless channel modeling leads to a NUM problem formulation that
accommodates non-convex and time-varying utilities. We consider both
cases of non orthogonal and orthogonal access of users to the
medium. In the first case, scheduling is performed via power
control, while the latter separates scheduling and power control and
the role of power control is to further increase users' optimal
utility by exploiting random reductions of the stochastic channel
power loss while also considering energy efficiency. Finally,
numerical results evaluate the performance and operation of the
proposed approach and study the impact of several involved
parameters on convergence.
\end{abstract}

\begin{IEEEkeywords}
Wireless multihop networks; Network Utility Maximization; Stochastic
wireless channels; Non convex utilities; Time-varying utilities; Non
stationarity; Transient phenomena; Long Term Fading; Short Term
Fading;
\end{IEEEkeywords}}

\maketitle \IEEEdisplaynotcompsoctitleabstractindextext
\IEEEpeerreviewmaketitle

\section{Introduction}
\label{sec:intro} Network Utility Maximization (NUM) is a very
popular tool in the communications research community, for
cross-layer design and optimization of wireless networks. Typically,
a utility function is assigned to each network flow
(source-destination pair), and the sum of all utilities over the
network is maximized, subject to network stability constraints. Most
approaches in literature applying NUM, consider ideal or stationary
and ergodic wireless channels. However, under realistic conditions,
fading occurring in wireless channels hampers the performance in
wireless communications, leading to stochastic, i.e. time and space
varying, random (thus unknown), wireless link capacities possibly
bearing time-varying statistics. Therefore, it becomes necessary to
reformulate/redesign and solve the basic NUM problem for
incorporating realistic stochastic wireless channel conditions by
addressing non-stationarity issues, and also transient phenomena
occurring when the network operates for a finite time interval.

This paper aims at treating this problem by proposing a novel
optimization framework for joint congestion control, routing,
scheduling and power control based on NUM, under stochastic possibly
non-stationary Long Term Fading (LTF) or Short Term Fading (STF)
wireless channels. Congestion control determines the optimal
sources' data production rates, routing determines the optimal
routes of the flows within the network, while the set of
transmitting links and their corresponding transmission powers are
chosen based on scheduling and power control. The proposed
optimization framework refers to a finite network operation and it
is developed for two cases, where the first one deals with
non-orthogonal access to the wireless medium and the second deals
with orthogonal access to the wireless medium. In the first case,
scheduling is performed via power control, while in the latter case,
scheduling and power control are separated. On the one hand,
non-orthogonal access leads to a very hard to solve cross terminal
power control/scheduling problem in the physical layer. On the other
hand, orthogonal access to the medium allows for more efficient and
distributed power control in the physical layer, although
introducing the need of the NP-hard computation of all the
independent sets of the network graph for scheduling. Importantly,
in the case of orthogonal access, the role of power control is to
further boost link capacity and consequently source rates by
exploiting good channel states and to save energy when the channel
state is destructive for the transmitted signal.
 %Stochastic Optimal Control \cite{yong99} is an analytic

%tool for making decisions in systems characterized by uncertainty.
%It is assumed that the state of the system is modeled via a
%stochastic differential equation (SDE) of It\^{o} type
%\cite{oksendal03}, which is in other words referred to as a
%diffusion
%model. %The basic source of uncertainty in diffusion models is white
%%noise which represents the combined influence of several forces on
%%the system.
%The goal is to identify an admissible control policy, that optimizes
%the desirable performance metrics over the system's operation.
%\section{Related Work \& Contribution}
%\label{sec:relatedwork}

The structure of the rest of the paper is as follows. Section
\ref{sec:relatedwork} presents the related literature while
summarizing the basic contributions of this paper. Then, Section
\ref{sec:channelmodels} describes the considered STF and LTF channel
models derived with the use of stochastic differential equations,
while Section \ref{sec:sysmodel} presents the considered system
model. Sections \ref{sec:nonortho} and \ref{sec:orthogonal} focus on
the analysis and solution of the proposed optimization framework in
the cases of non-orthogonal and orthogonal access to the wireless
medium respectively. Finally, in Section \ref{sec:simulation}
numerical results are presented to evaluate the proposed approach
and Section \ref{sec:conclusions} concludes the paper.

\section{Related Work \& Contributions}\label{sec:relatedwork}Several works exist in the literature targeting at
incorporating the stochastic wireless channels in the NUM problem's
formulation and solution. In \cite{chiang05}, the channel quality is
expressed via the SIR (Signal-to-Interference-Ratio), since the
latter is affected by interference due to parallel transmissions and
LTF. It is assumed that the LTF parameters are deterministic and
slowly varying, allowing for the algorithms which perform joint
congestion and power control to converge in the meantime of their
change. %Also, in \cite{chiang05}, the maximum allowed error in the
%estimation of stochastic LTF parameters is studied, so that the
%proposed algorithm's convergence is not perturbed.
In \cite{papandriopoulos08}, \cite{fischione11}, the case of
composite fading (LTF and STF) is examined, considering channel
conditions that vary faster than the algorithm's convergence, via
the use of outage-probabilities in the NUM problem formulation. In
\cite{papandriopoulos08}, \cite{fischione11}, STF follows Rayleigh
and Nakagami distributions respectively, where however the
statistics of the distributions remain invariant. A different
approach is followed in \cite{neill09, firouzabadi10}, where NUM is
extended to Wireless NUM (WNUM), with random channel conditions.
WNUM leads to policies for controlling the network by responding
optimally to the change of the channel state, based on random
samples without a priori knowledge of the wireless channels'
statistics. Although the statistics of the channel may be unknown,
the latter is considered as stationary and ergodic.

In the sequel, in \cite{marques}, NUM is employed to perform, joint
congestion control, power control, routing and scheduling assuming
that the channel fading process is stationary and ergodic. Similarly
in \cite{chen06,huang13}, joint congestion control, routing and
scheduling is performed in the framework of NUM while assuming that
there is a finite number of channel states. In these works, the
network functions in time slots, where during each time slot the
channel state remains stable and changes randomly and independently
on the boundary of time slots. Finally, in \cite{chen11} the
convergence of primal-dual algorithms for solving NUM is studied
under wireless fading channels with time-varying parameters (and
thus statistics). Time-varying statistics of wireless channels lead
to time-varying optimal solutions of the NUM problem necessitating
the study of how well the solution algorithms track the changes in
the optimal values. However, it is assumed that the channel fading
parameters vary following a Finite State Markov Chain.

In a nutshell, in the existing body of research work in literature,
the wireless channel modeling in the framework of NUM is
characterized by one or combinations of the following assumptions:
(a) The channel process is stationary and ergodic. (b) The
statistics of the wireless channel are fixed in time (and known), or
vary at slower rate than the one of the network control algorithms'
convergence. (c) The statistics of the channel change according to a
Finite State Markov Chain. All previous approaches are not capable
of capturing and tracking complex time and space variations in the
propagation environment of realistic systems \cite{olama10}. In
\cite{olama10}, wireless channel models for both LTF and STF are
introduced based on Stochastic Differential Equations (SDEs)
\cite{oksendal03} in order to capture higher order dynamics of the
wireless channel. In this case the wireless channel is modeled via
stochastic processes which may have time-varying statistics. By
means of SDEs, it is possible to express an LTF or an STF channel
capturing both space and time variations \cite{olama10},
\cite{menemenlis99}, as it will be described in detail in Section
\ref{sec:channelmodels}.

In our paper, the NUM problem is reformulated and solved using the
SDE model to capture the wireless channel state. Emphasis is placed
on LTF especially for demonstration purposes. Due to the possible
non-stationarity of the wireless channel we cannot formulate the NUM
problem based on the stationary mean values of the involved
optimization variables (e.g. \cite{marques}, \cite{chen06},
\cite{stai14}). On the contrary we will adopt the stochastic optimal
control problem's formulation \cite{fleming06} based on expected
values over time integrals, thus also allowing for the consideration
of a finite time duration of the network's operation. In
\cite{stai15}, we proposed a preliminary version of this approach
focused on congestion and power control in the case of orthogonal
access to the wireless medium. The basic contributions of this paper
can be summarized as:
\begin{itemize}
\item We develop a framework for the cross-layer design and control of the operation of wireless multihop networks, i.e. for congestion and power control, routing and scheduling,
 over wireless channels (LTF or STF) that are stochastic but not necessarily
stationary.
\item The proposed problem formulation
adopts a (more realistic) finite duration of the network's operation
(e.g. corresponding to the case of finite battery levels of the
wireless nodes or of finite flows) where the wireless channel may
still operate in a transient state, even though there may exist a
limiting stationary distribution. The extension to an infinite
duration of the network's operation is discussed.
\item Utilities are not necessarily convex functions but adopt a more general form (more specifically a continuous differentiable one), contrary to the related papers in
literature (e.g. \cite{marques}, \cite{chen06}, \cite{stai14}) that
assume convex utility forms. This fact is important since it allows
for addressing the case of real time traffic modeled, for example,
by sigmoidal utilities \cite{tychogiorgos13}. Zero duality gap is
analytically proven in this case of general utilities, following a
technique that leverages from the wireless channels' continuous
stochastic modeling.
\item The proposed problem formulation (expected values over time integrals) allows for the adoption of time-varying utility functions. This serves the purpose of evolving users'
preferences/needs and is also aligned with the finite network
duration, i.e. nodes may desire to produce significantly less data
close to the end of the network operation. To the best of our
knowledge, this fact has not been addressed in the literature.
\item Power control is shown to further boost link capacity
and thus source rates by exploiting good channel conditions while
saving energy in case of destructive channel conditions.
\item Finally, we prove an interesting theorem in the case of LTF, elucidating a basic advantage with
respect to the optimal users' utilities when exploiting the random
channel fluctuations, compared to the conventional NUM problem
formulation (e.g. \cite{chen06}, \cite{stai14}). Specifically, it is
proven that, contrary to what is possibly expected, a higher value
of the diffusion coefficient of the wireless channels' power loss
leads to higher optimal sum of users' utilities, fact that cannot be
captured by the conventional NUM modeling approach. We also show via
numerical evaluations that a higher diffusion coefficient achieves
simultaneously a reduced power consumption leading to energy
efficiency. This result emphasizes the importance of utilizing a
more realistic power loss model such as the one of an SDE, as
opposed to the mean power loss model used in the conventional NUM
problem formulation.
\end{itemize}
\section{Background on Wireless Channel Modeling via SDEs} \label{sec:channelmodels}
%\subsection{Background Information on SDE-based Power Loss Model}
The objective of this section is to briefly describe the SDE-based
LTF and STF channel models, developed in the literature, and their
assumptions, upon which, the optimization problems of the following
sections will be formulated and solved.
\subsection{Long Term Fading (LTF)}\label{sec:ltf}
LTF consists of path loss and shadowing \cite{goldsmith05}. Path
loss is due to the dissipation of the transmitted power and the
effects of the propagation channel, while shadowing is caused by
obstacles between the transmitter and the receiver. LTF depends on
the geographical area and occurs in sparsely populated or suburban
areas. %Since both path loss and shadowing occur over relatively long
%distances (on the order of the length of objects) the corresponding
%fading is characterized as ``long term".
Before describing the dynamic in time LTF model for the wireless
channels \cite{olama10}, we recall the conventional LTF model, where
the power loss $PL$ along a given link $(i,j)$ between nodes $i$ and
$j$ in Euclidean distance $d_{ij}$, is given \cite{goldsmith05}:
\begin{equation}
PL(d_{ij})[dB]=\overline{PL}(d_{0})[dB]+ 10\gamma \log
\big(\frac{d_{ij}}{d_{0}} \big)+\tilde{Z}, ~d_{ij}\geq d_0,
\label{eq:pathloss}
\end{equation}
\noindent where $\gamma$ is the power loss exponent and depends on
the wireless propagation medium, $d_{0}$ is the reference distance,
$\overline{PL}(d_{0})$ is the expected power loss on the reference
distance, and $\tilde{Z}\sim \mathcal{N}(0;\sigma^2)$, is a gaussian
random variable with zero mean and variance $\sigma^2$, used to
model any uncertainty in the propagation
environment. %From Rel. (\ref{eq:pathloss}), it can be easily observed,
Note that the statistics (mean (denoted as $\overline{PL}(d_{ij})$)
and variance) of the conventional LTF model are invariant in time.

In the following, we describe the extension of the LTF model to
dynamically changing conditions in time, as it is developed in
\cite{olama10}. Specifically, the random variable, $PL(d_{ij})[dB]$,
of Eq. (\ref{eq:pathloss}), becomes a stochastic process denoted as
$\{X_{ij}(t)\}_{t\geq t_0}$ ([dB]), where $t$ represents time. Time
dependence is used to capture time variations of the propagation
environment due to e.g. movement of objects and scatterers in the
area surrounding the network. In a similar spirit with Eq.
(\ref{eq:pathloss}), $\{X_{ij}(t)\}_{t\geq t_0}$, represents the
power lost by the signal during a transmission from $i$ to $j$ at a
particular distance $d_{ij}$. Although, $\{X_{ij}(t)\}_{t\geq t_0}$
depends on the distance $d_{ij}$, we do not explicitly model this
dependence
as the network considered is static.  %The dependence on
%$\tau_{l}$, although the distances are fixed, is due to the fact
%that different links may have different distances separating their
%transmitters and receivers thus leading to different values of
%$\tau_{l}$, i.e. generally the path loss is still a function of
%distance.

In \cite{olama10}, $\{X_{ij}(t)\}_{t\geq t_0}$, $\forall (i,j)$, are
modeled as solutions of mean reverting linear SDEs, given as:
\begin{eqnarray}
dX_{ij}(t)=\beta_{ij}(t)\big(\gamma_{ij}(t)-X_{ij}(t)\big)dt+\delta_{ij}(t)dW_{ij}(t),~X_{ij}(t_0)\sim
\mathcal{N}(\overline{PL}(d_{ij})[dB];\sigma^2),
\label{eq:sdepathloss}
\end{eqnarray}\noindent where $\{W_{ij}(t)\}_{t\geq t_0}$, $\forall (i,j)$, are
independent standard Brownian motions defined over a filtered
probability space $(\Omega,\mathcal{F},\{\mathcal{F}_t\}_{t\geq
t_0},\mathbf{P})$ and each one being independent of the
corresponding $X_{ij}(t_0)$. %The Brownian motions are considered being independent among different links. %but
%they are correlated for the same link in its two different
%directions, i.e. $W_{ij}(t)=W_{ji}(t),~\mathbf{P}$-almost-sure
%($\mathbf{P}$-a.s.), $\forall (i,j)$.
$\{\mathcal{F}_t\}_{t\geq t_0}$ is the filtration produced by
$X_{ij}(t_0)$, $\forall (i,j)$, and the Brownian motions themselves.
For each $(i,j)$, $\gamma_{ij}(t)$ is the power loss level
$X_{ij}(t)$ is attracted to, $\beta_{ij}(t)$ is the positive speed
of this adjustment and finally, $\delta_{ij}(t)$ is the diffusion
coefficient of the SDE, determining the ``noise" of the channel. The
parameters $\beta_{ij}(t),\gamma_{ij}(t),\delta_{ij}(t)$, $\forall
(i,j)$, are assumed to be deterministic and can be estimated
directly from signal measurements following the approaches in
\cite{olama06}, \cite{olama09}, \cite{charalambous08}, which can be
implemented online, i.e. while receiving the signal measurements.
The existence of a strong solution to the SDE (\ref{eq:sdepathloss})
is satisfied if the relation $\int\limits_s^T \left\{\beta_{ij}(t)\left|\gamma_{ij}(t)\right|+\delta _{ij}^2(t)\right\}dt < \infty , \forall (i,j)$ holds \cite{oksendal03}. %and bounded for finite
%Furthermore, it is assumed that
%$\beta_{ij}(t)=\beta_{ji}(t),\gamma_{ij}(t)=\gamma_{ji}(t),\delta_{ij}(t)=\delta_{ji}(t)$ and $X_{ij}(t_0)=X_{ji}(t_0)$, $\forall (i,j),t$. %Note
%that even though these parameters depend on the distance between
%transmitter and receiver, we omit the dependence on $\tau_l$,
%considering it integrated in the subscript $l$ of the diverse
%quantities, since distances are fixed.
The time dependent attenuation coefficient (in squared magnitude)
equals to: %\begin{equation}
$a_{ij}(t)=e^{-\frac{\ln10}{10}X_{ij}(t)}=e^{K X_{ij}(t)},~\forall
(i,j),~K=-\frac{\ln10}{10}$. %\label{eq:attencoefsde}
%\end{equation}
 %The pathwise unique solution of
%the linear SDE (\ref{eq:sdepathloss}) for every initial condition
%$X_{l}(t_0)$ is \cite{oksendal03}:
%
%
%\begin{eqnarray}
%X_l(t)=e^{Y_l(t)}\left(X_l(t_0)+\int_{t_0}^t
%\beta_l(s)\gamma_l(s)e^{-Y_l(s)}ds+\int_{t_0}^t
%\delta_l(s)e^{-Y_l(s)}dW_l(s)\right),\label{eq:solutionpowerloss} \\
%\mathrm{where:~~} Y_l(t)=-\int_{t_0}^t \beta_l(s) ds. \nonumber
%\end{eqnarray}
%
In \cite{olama10}, it is shown that when all the parameters of the
SDE (\ref{eq:sdepathloss}) are time independent, its solution tends
to the conventional LTF model (Eq. (\ref{eq:pathloss})) as
$t\rightarrow \infty$ (which is stationary). In general when the
parameters of the SDE (\ref{eq:sdepathloss}) change with time,
$X_{ij}(t)$ is gaussian with time-varying statistics and a
stationary
distribution may not exist. %In view of Rel. (\ref{eq:attencoefsde}),
%Rel. (\ref{eq:sdepathloss}), the formula for the channel capacity,
%and by applying It\^{o}'s formula we may derive an SDE directly
%modeling the Shannon capacity, $C_l(t)$ for each link $l$. Then, if
%we define the value $C_{lM}(t)=\frac{B_l}{\ln 2}\ln
%\left(\frac{P_l}{N_l} e^{K \gamma_l(t)}+1\right)$, the dynamics of
%$C_l(t)$ tend to restore its value to $C_{lM}(t)$.
\subsection{Short Term Fading (STF)}\label{sec:stf}
In a similar spirit as LTF, in \cite{menemenlis99}, \cite{olama09},
\cite{olama10} a stochastic model for STF wireless channels has been
developed, alleviating the assumption of stationarity. This kind of
signal fading is due to the constructive and destructive addition of
multipath components \cite{goldsmith05} created from reflections,
diffractions and scattering and usually occurring in densely built-up areas. %The characterization as
%``short term" is used since signal variation due to multipath occurs
%over very short distances, i.e. on the order of the signal
%wavelength.
The statistics of the STF models usually applied in the literature
(e.g. Rayleigh, Nakagami, Ricean, etc. \cite{papandriopoulos08},
\cite{goldsmith05}), are assumed constant over local areas (i.e. at
a microscopic level) \cite{menemenlis99}. However, STF wireless
channels are of stochastic nature with time varying statistics
mainly due to the continuous and arbitrary change of the propagation
environment if the transmitter, the receiver or objects between them
move. The latter is the main reason why in this paper, we adopt a
stochastic process with time varying statistics for modeling STF
channels.

For the models developed in \cite{menemenlis99}, \cite{olama09}, the
inphase and quadrature components of the wireless fading channels
are assumed conditionally uncorrelated gaussian random variables
(thus conditionally independent). In the case of flat fading, the
multipath components are not resolvable and can be considered as a
single path. Then, the inphase, $I$, and quadrature, $Q$, components
over one link (e.g. $(i,j)$) can be realized as:
\begin{eqnarray}
dX_{I}(t)=A_I(t) X_{I}(t)dt+B_I(t) dW_I(t),~X_{I}(t_0),~
I(t)=C_I X_{I}(t),\nonumber\\
dX_{Q}(t)=A_Q(t) X_{Q}(t)dt+B_Q(t) dW_Q(t),~X_{Q}(t_0),~ Q(t)=C_Q
X_{Q}(t),\nonumber \label{eq:sdestf}
\end{eqnarray}\noindent where $X_{I}(t),~X_{Q}(t)$ are the state
vectors of the inphase and quadrature components and
$\{W_I(t)\}_{t\geq t_0}$, $\{W_Q(t)\}_{t\geq t_0}$ are independent
standard Brownian motions corresponding to the inphase and the
quadrature components respectively, defined over a filtered
probability space $(\Omega,\mathcal{F},\{\mathcal{F}_t\}_{t\geq
t_0},\mathbf{P})$. The same model describes every link with
different parameter values $A_I(t),~A_Q(t)$,~$B_I(t)$, $B_Q(t)$,
$C_I,~C_Q$ and (independent) Brownian motions $W_I(t),~W_Q(t)$, but
this fact is not explicitly modeled for ease of presentation.
%For different links, the same assumptions as in LTF (Section \ref{sec:ltf}) hold. %and finally, $f_j^I(t),~f_j^Q(t)$
%are arbitrary functions representing the line-of-sight of the
%inphase and quadrature components characterizing further dynamic
%variations on the environment.
 The attenuation coefficient (in squared magnitude) is \cite{charalambous08} $a_{ij}(t)=I(t)^2+Q(t)^2.$ %Also, flat fading is assumed,
%where the ad hoc channel has a purely multiplicative effect on the
%signal and the multipath components are not resolvable and can be
%considered as a single path.
%According to \cite{menemenlis99}, \cite{olama10}, the inphase and
%quadrature components of the signal on the $j$ path for a particular
%wireless link (e.g. $(k,l)$), denoted as $I_j(t),~Q_j(t)$
%respectively, can be realized as:
%\begin{eqnarray}
%dX_{I,j}(t)=A_I X_{I,j}(t)dt+B_I dW_{j}^I(t),~X_{I,j}(0)\nonumber\\
%I_j(t)=C_I X_{I,j}(t)+f_j^I(t),\nonumber\\
%dX_{Q,j}(t)=A_Q X_{Q,j}(t)dt+B_Q dW_{j}^Q(t),~X_{Q,j}(0)\nonumber\\
%Q_j(t)=C_Q X_{Q,j}(t)+f_j^Q(t),\nonumber \label{eq:sdestf}
%\end{eqnarray}\noindent where $X_{I,j}(t),~X_{Q,j}(t)$ are the state
%vectors of the inphase and the quadrature components,
%$\{W_{j}^I(t)\}_{t\geq 0},~\{W_{j}^Q(t)\}_{t\geq 0}$ are independent
%standard Brownian motions corresponding to the inphase and the
%quadrature components of the $j$ path respectively and finally,
%$f_j^I(t),~f_j^Q(t)$ are arbitrary functions representing the
%line-of-sight of the inphase and quadrature components
%characterizing further dynamic variations on the environment. The
%probability density function (in conditional form) of
%$\sqrt{I_j(t)^2 +Q_j(t)^2}$ includes the non-stationary Ricean and
%Rayleigh distributions \cite{menemenlis99}.% Note that in
%\cite{menemenlis99} a mean reverting SDE model is proposed for the
%inphase and quadrature components similarly with the one for the LTF
%of SDE (\ref{eq:sdepathloss}).
As in the case of LTF, the coefficients
$A_I(t),~A_Q(t),~B_I(t),~B_Q(t),~C_I,~C_Q$ can be obtained directly
via signal measurements following the methodology proposed in
\cite{charalambous08}, \cite{olama09} using the EM algorithm
together with Kalman filtering. This model leads to time-varying
mean and variance for the inphase and the quadrature components and
thus for the STF wireless channel and includes the Ricean, Rayleigh
and Nakagami distributions as special cases \cite{menemenlis99}.

In the rest of the paper, the vectors $X(t),~W(t)$ denote
collectively (for all links) the channel states and the Brownian
motions respectively at time $t$. Note that we assume that the
wireless channels are uncorrelated. This assumption is also made in
\cite{neely10}, where it is argued that inter-link correlations do
not impact the network capacity region and therefore the maximum
utility.
\section{System Model \& Assumptions}\label{sec:sysmodel}
We consider a static wireless multihop network with $N$ nodes and
$E$ directed links forming the set $\mathcal{E}$. The network serves
$F$ overlaying flows (source-destination pairs) over a finite
duration (lifetime) $[s,T]$. At time $t \in [s,T]$, $\lambda_i^d(t)$
data (e.g. packets) are produced from the source node $i$ for its
destination node $d$. Let $S_r(d)$ be the set of sources for node
$d$. Then, $\lambda_i^d(t)=0,~\forall i \notin S_r(d),~\forall t \in
[s,T]$. We denote with $r_{ij}^d(t)$ the communication traffic on
the link $(i,j)$ for destination $d$ at time $t\in [s,T]$. Then,
$R(t),~\Lambda(t),$ denote collectively the variables
$\{r_{ij}^d(t)\}_{\forall d,(ij)},~\{\lambda_i^d(t)\}_{\forall
i,d},$ respectively, at time $t$. The set $\mathcal{R}(i,d)$
consists of the one-hop (out-)neighbors of node $i$ which are
allowed to serve as next-hop nodes towards $d$ according to the
routing protocol under consideration. If there are no routing
constraints, we consider $\mathcal{R}(i,d)=\mathcal{N}^{out}(i)$,
where $\mathcal{N}^{out}(i)=\{j|(i,j)\in \mathcal{E}\}$. Also,
$r_{ij}^d(t)=0$, if $j \notin \mathcal{R}(i,d)$ and
$r_{ii}^d(t)=0,~r_{di}^d(t)=0,~\forall i,d,t\in [s,T]$. Furthermore,
we assume that the transmitter of the link $(i,j)$, i.e. the node
$i$, transmits with power $P_{ij}(t)$ at time $t \in [s,T]$. $P(t)$
expresses collectively the transmission powers of all links at time
$t \in [s,T]$.

Each source node associates its satisfaction for its produced data
for destination $d$, $\lambda_i^d$, at time $t \in [s,T]$, with a
time-varying continuous differentiable utility function
$U_i^d(\lambda_i^d, t)$. Several utility functions used in
literature belong in this category, such as the strictly convex and
increasing $a$-fair utility, including the logarithmic one
\cite{eryilmaz06}. It is further assumed that $U_i^d(\lambda_i^d,
t)$ is increasing with $\lambda_i^d$ and uniformly bounded as
$t\rightarrow \infty$. Also, a (continuous differentiable) cost
function, $J_{ij}(P_{ij})$, is assigned to each directed link
$(i,j)$ with respect to its transmission power $P_{ij}(t)$, $t \in
[s,T]$. In literature, $J_{ij}(P_{ij})$ is often assumed to be a
strictly convex function \cite{marques}.

The proposed cross-layer framework includes routing, scheduling,
power and congestion control. Routing (network layer) determines the
amount of traffic for each destination that will be served by every
link, by optimizing $R(t)$, $\forall t \in [s,T]$. Scheduling and
power control (MAC and physical layers) determine which links are
going to transmit and their transmission power by optimizing $P(t)$,
$\forall t \in [s,T]$. Finally, congestion control (transport layer)
optimizes $\Lambda(t)$, $\forall t \in [s,T]$. Therefore, the
proposed cross-layer scheme aims at determining the optimal values
of the control variables $R(t),~\Lambda(t),~P(t)$, $\forall t \in
[s,T]$, according to an optimality criterion designed with the aid
of the utility and cost functions defined above, considering the
channel models of Section \ref{sec:channelmodels}.

 Due to the considered underlying
channel processes, the control variables $R,~\Lambda,~P$ should be
in general defined as stochastic processes. Let us define the value
range for each $\lambda_i^d$, $U_{\lambda}=[0,\lambda_{\max}]$, and
the corresponding feasible set
$\mathcal{U}_{\lambda}=\left\{\lambda:[s,T]\times \Omega\rightarrow
U_{\lambda}: \lambda \mathrm{~is~} \{\mathcal{F}_t\}_{t\geq s}
\mathrm{~adapted} \right\}$. Then, $\Lambda \in
\mathcal{U}_{\lambda}^F$. Similarly, we define the value range for
each $r_{ij}^d$, $U_{r}=[0,R_{\max}]$, and the corresponding
feasible set $\mathcal{U}_{r}=\left\{r:[s,T]\times \Omega\rightarrow
U_{r}: r \mathrm{~is~} \{\mathcal{F}_t\}_{t\geq s} \mathrm{~adapted}
\right\}$. Then, $R \in \mathcal{U}_{r}^{E\times (N-1)}$. Finally,
we define the value range for each $P_{ij}$, $U_{P}=[0,P_{\max}]$,
and the corresponding feasible set
$\mathcal{U}_{P}=\left\{P:[s,T]\times \Omega\rightarrow U_{P}: P
\mathrm{~is~} \{\mathcal{F}_t\}_{t\geq s} \mathrm{~adapted}
\right\}$. Then, $P \in \mathcal{U}_{P}^{E}$. We will use
$\EE_{s,x}$ to denote expectations given the initial condition
$X(s)=x$.

At this point, we distinguish two cases with respect to the access
to the wireless medium. Based on the two types of access, two
cross-layer problems are developed. The first case concerns
non-orthogonal access to the wireless medium, in which the
transmitters are allowed to access the wireless medium
simultaneously (one frequency carrier is assumed), while the
interfering transmissions are considered as noise. For this case we
define the Signal-to-Interference-plus-Noise-Ratio $(SINR)$ for the
link $(i,j)$ as follows: %\begin{equation}
$SINR_{ij}(t)=\frac{a_{ij}(t)P_{ij}(t)}{N_0+\sum_{(k,l)\in
\mathcal{I}_{ij}}a_{kj} P_{kl}(t)}$,
%\end{equation}\noindent
where $\mathcal{I}_{ij}$ denotes the subset of $\mathcal{E}$
containing the links that interfere with the link $(i,j)$. $N_0$
(Watts) stands for the average background noise at the receiver's
($j$) side and $a_{ij}(t)$ is defined in Section
\ref{sec:channelmodels} depending on the fading type. The capacity
of the link $(i,j)$ is given by the Shannon's formula in
bits/sec as %\begin{equation}\label{eq:capacitynonconvex}
$C_{ij}(P(t))=B_{ij}\log_2(1+SINR_{ij}(t))$,
%\end{equation}\noindent
where $B_{ij}$ ($Hertz$) is the wireless channel's bandwidth at link
$(i,j)$. The second case refers to the orthogonal access to the
wireless medium, where only non-interfering links can access
simultaneously the wireless medium. Orthogonal access decreases the
complexity of the proposed framework's operation in the physical
layer (power control), as it will be shown in later sections, while
it is nearly optimal when interference is strong \cite{marques}. In
this case, the connectivity graph of the wireless multihop network
is important in identifying the feasible schedules. Based on the
latter, the finite set of all possible independent sets of links
(i.e. links that do not interfere with each other) is constructed.
Only links belonging to the same independent set can access the
wireless medium simultaneously. In this case, the capacity of the
link $(i,j)$ is a concave function of $P_{ij}$, given in bits/sec as
%\begin{equation}\label{eq:capacityconvex}
$C_{ij}(P_{ij}(t))=B_{ij}\log_2\left(1+\frac{a_{ij}(t)P_{ij}(t)}{N_0}\right)$.
%\end{equation}

\section{Non-orthogonal Access to the Medium}
\label{sec:nonortho}
\subsection{Problem Formulation \& Analysis} \label{sec:Problform}
According to the discussion in Section \ref{sec:sysmodel}, the
optimization framework, denoted as $\mathbf{P_1}$, is formulated as
follows.
\begin{eqnarray}
P_1:=\max_{\Lambda \in \mathcal{U}_{\lambda}^F, ~R \in
\mathcal{U}_{r}^{E\times (N-1)},~ P \in \mathcal{U}_{P}^{E}}
\EE_{s,x}\left[\int_s^T \left(\sum_{i,d:i\in S_r(d)}
U_i^d(\lambda_i^d(t),t)-
\sum_{(i,j)}J_{ij}(P_{ij}(t))\right)dt\right]\nonumber \\
s.t.\nonumber\\
\EE_{s,x}\left[\int_s^T \lambda_i^d(t)dt +\int_s^T \sum_{j:i \in
\mathcal{R}(j,d)} r_{ji}^d(t)dt\right]\leq \EE_{s,x}\left[\int_s^T
\sum_{j\in \mathcal{R}(i,d)} r_{ij}^d(t)dt\right],~ \forall i,d \label{const:1}\\
\EE_{s,x}\left[\int_s^T \sum_{d} r_{ij}^d(t)dt\right]\leq
\EE_{s,x}\left[\int_s^T
C_{ij}(P(t))dt\right], ~\forall (i,j)\in \mathcal{E}\label{const:2}\\
\EE_{s,x}\left[\int_s^T\sum_{j\in
\mathcal{N}^{out}_i}P_{ij}(t)dt\right]\leq P_{i,\max}, ~\forall
i\label{const:3}
\end{eqnarray}
The objective function expresses the trade-off between the
accumulated for all sources utilities/satisfaction of producing data
and the accumulated cost due to power consumption for link
transmissions. Thus, its maximization targets at improving energy
efficiency by maximizing source rates while penalizing the cost of
power consumption for achieving them when $J_{ij}\neq 0,~\forall
(i,j)$, i.e. when power control is applied. The first constraint
relates to the flow conservation at each node and for each
destination, used to ensure high throughput for the examined time
interval $[s,T]$. The second constraint relates to the capacity
restriction (right side) due to power, channel and interference
limitations for each link. Finally, the third constraint, relates to
a limitation on the total power consumption of each node for the
examined time interval, $[s,T]$, (left side) according to its energy
storage, denoted as $P_{i,\max}>0$ (right side).

It is important to note that we consider continuous time network
operation (thus, using time integrals), for ease of presentation,
due to the continuous time evolution of the channel state (Section
\ref{sec:channelmodels}). The case of discrete time network
operation can be obtained trivially by replacing the integrals
$\int_s^T$ by sums $\sum_{t=0}^{t=N_L}$ where $N_L$ is the number of
time slots in $[s,T]$ considered for the network operation and each
denoted by $t$. In this case the channel state will be sampled at
each time slot as described in the subsequent sections. Furthermore,
for considering an infinite $T$, we should also divide every
integral by $T$, thus considering time averages.

This problem is non-convex due to the forms of the capacity, utility
and cost functions. Even if the utility and cost functions were
concave and convex respectively as commonly assumed in literature,
the problem would still be non-convex due to the capacity function
forms. However, we will prove that its duality gap is zero, which is
an important fact as it renders the Lagrange (dual)-based
optimization method optimal. The latter allows for devising
efficient algorithmic solutions as it leads to a separable
optimization problem with respect to the variables of each layer
while using the Lagrange multipliers for the communication between
adjacent layers for achieving a cross-layer optimal solution. Let
$\mu_i^d\geq 0, ~\forall i,d, ~l_{ij}\geq 0,~\forall (i,j)\in
\mathcal{E},~\nu_i\geq 0, ~\forall_i$ be the Lagrange multipliers
associated with the constraints (\ref{const:1}), (\ref{const:2}),
(\ref{const:3}) correspondingly. Denote with $L$ the whole set of
the Lagrange multipliers. Then, the dual function is formulated as
follows:
\begin{eqnarray}\label{eq:dual}
L_A(L)=\max_{\Lambda \in \mathcal{U}_{\lambda}^F, ~R \in
\mathcal{U}_{r}^{E\times (N-1)},~ P \in \mathcal{U}_{P}^{E}}
~\EE_{s,x}\left[\int_s^T \left(\sum_{i,d:i\in S_r(d)}
U_i^d(\lambda_i^d(t),t)-
\sum_{(i,j)}J_{ij}(P_{ij}(t))\right)dt\right]\nonumber \\
-\sum_{i,d} \mu_i^d \EE_{s,x}\left[\int_s^T \lambda_i^d(t)dt
+\int_s^T \sum_{j:i \in \mathcal{R}(j,d)} r_{ji}^d(t)dt-\int_s^T
\sum_{j\in \mathcal{R}(i,d)} r_{ij}^d(t)dt\right] \nonumber\\
-\sum_{(i,j)\in\mathcal{E}}l_{ij}\EE_{s,x}\left[\int_s^T \sum_{d}
r_{ij}^d(t)dt-\int_s^T C_{ij}(P(t))dt\right]- \sum_{i} \nu_i
\EE_{s,x}\left[\int_s^T\sum_{j\in \mathcal{N}^{out}_i}P_{ij}(t)dt-
P_{i,\max}\right].
\end{eqnarray}\noindent
Consequently, the dual problem of $P_1$ is defined as
$D_1:=\inf_{L}(L_A(L))$.
\begin{thm}
The problem $P_1$ has zero dual gap, i.e. if $P_1^*$ its optimal
value and $D_1^*$ the optimal value of the dual problem, then
$P_1^*=D_1^*$.\label{thm:1}
\end{thm}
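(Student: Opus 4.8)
The plan is to prove strong duality by the perturbation-function method, reducing everything to a single convexity statement about the set of achievable objective/constraint vectors, and then to supply that convexity from the \emph{continuous} (hence non-atomic) law of the channel state. First I would record weak duality, $P_1^*\le D_1^*$, which is immediate from the construction of $L_A(L)$. Next I would introduce the perturbation (value) function $v(z)$, defined as the optimal value of $P_1$ when the right-hand sides of the constraints (\ref{const:1})--(\ref{const:3}) are relaxed by a vector $z$. By standard Lagrangian/perturbation duality one has $P_1^*=v(0)$ and $D_1^*=v^{**}(0)$, the value at the origin of the concave upper closure $v^{**}$ of $v$; hence it suffices to show that $v$ is concave and upper semicontinuous at $0$.

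The core step is to prove that the achievable set
\[
\mathcal{A}=\Big\{\big(F,G_1,G_2,G_3\big):\ F,G_k \text{ are the objective and constraint integrals of } P_1 \text{ for some feasible } (\Lambda,R,P)\Big\}
\]
has convex closure; once this holds, the elementary observation that a convex combination of two feasible points lies in $\mathcal{A}$ and respects the perturbed constraints shows directly that $v$ is concave. The essential point is that, because the channel state $X(t)$ is \emph{exogenous} (its SDE does not depend on the controls) and the objective and all constraints are expected time-integrals of functions that are \emph{instantaneous} in the state and the controls, it is enough to consider feedback controls $u(t)=u(t,X(t))$; the achievable value vectors are then integrals of the set-valued map $\Gamma(t,x)=\{(\text{integrands})(u,t,x):u\in U_\lambda\times U_r\times U_P\}$ against the measure $\mu(dt,dx)=\rho_t(x)\,dx\,dt$, where $\rho_t$ is the Gaussian density of $X(t)$ furnished by the SDE model of Section \ref{sec:channelmodels}. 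Since $\mu$ is non-atomic, a Lyapunov/Aumann-type convexity theorem yields $\int\Gamma\,d\mu=\int\mathrm{conv}\,\Gamma\,d\mu$, so $\mathcal{A}$ is convex \emph{even though} the per-state feasible set $\Gamma(t,x)$ is non-convex; this is exactly where the non-convex Shannon capacity $C_{ij}$ and the general, possibly non-concave, utilities $U_i^d$ are absorbed. Equivalently, the convexification can be exhibited constructively by time-sharing between two feasible controls on a fine deterministic partition of $[s,T]$ and letting the mesh tend to zero, which keeps the control $\{\mathcal{F}_t\}$-adapted and drives the value vector to the desired convex combination.

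Finally I would close the gap between $v(0)$ and $v^{**}(0)$ through a constraint qualification. The flow-balance constraint (\ref{const:1}) and the power-budget constraint (\ref{const:3}) are affine in the controls, while the capacity constraint (\ref{const:2}) can be made to hold strictly (take $R\equiv 0$ and any $P$ with $\EE_{s,x}[\int_s^T C_{ij}(P(t))\,dt]>0$), so a refined Slater condition holds, $v$ is continuous at the origin, and together with concavity this forces $v(0)=v^{**}(0)$, i.e. $P_1^*=D_1^*$.

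The main obstacle I expect is making the convexification rigorous in the infinite-dimensional, adapted setting: specifically, justifying the reduction to feedback controls under the adaptedness requirement and discharging the integrability hypotheses of the Lyapunov/Aumann theorem. The latter needs $\Gamma(t,x)$ to be integrably bounded, which requires controlling the growth of $C_{ij}$, $U_i^d$ and $J_{ij}$ against the Gaussian tails of $X(t)$; the compactness of the control ranges $U_\lambda,U_r,U_P$, the at most logarithmic growth of $C_{ij}$ in $X(t)$, and the finite horizon $[s,T]$ should make this routine, but it is precisely the step that genuinely uses the continuous, density-admitting channel model rather than a finite-state one.
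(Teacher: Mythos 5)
Your proposal is correct and rests on the same underlying mechanism as the paper --- concavity of the perturbation function obtained from a Lyapunov-type convexity argument over a non-atomic measure --- but the execution is genuinely different. The paper works entirely on the path space of channel realizations: given two perturbations with optimal solutions $(\underline{\Lambda},\underline{R},\underline{P})$ and $(\overline{\Lambda},\overline{R},\overline{P})$, it forms a two-component vector measure from the corresponding objective and constraint integrals, invokes Lyapunov's theorem (non-atomicity coming from the continuity of the channel cdf) to produce an event $A_0$ with $W(A_0)=aW(\mathcal{H})$, and glues the two given adapted controls on $A_0$ and $A_0^c$ (taking plain convex combinations of the routing variables, which enter all expressions affinely). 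This sidesteps entirely the reduction to feedback controls that you yourself flag as the delicate step of your route: the paper never needs to represent controls as functions of the current state, since it only recombines the two specific adapted solutions at hand. Your route --- Aumann's theorem applied to the set-valued map $\Gamma(t,x)$ of instantaneous integrands against the occupation measure $\rho_t(x)\,dx\,dt$ --- proves the stronger statement that the entire achievable value set is convex, and it has the notable side effect that non-atomicity is already supplied by Lebesgue measure on $[s,T]$, so strictly speaking your argument does not need the continuous channel law at all (the paper's path-space measures would be atomic for a deterministic channel, whereas your time-sharing variant would still go through); the channel continuity only becomes indispensable in a discrete-time, finite-state formulation. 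You are also more explicit than the paper about the last step: the paper stops at concavity of $P_1(\Delta)$ and cites \cite{ribeiro10} for the conclusion, while your Slater-type qualification (affine flow and power constraints, strict feasibility of the capacity constraint at $R\equiv 0$) delivers $v(0)=v^{**}(0)$ directly. To make your core step airtight you would still need the passage from general adapted controls to feedback controls --- conditioning on $X(t)$ places the conditional value vector in $\mathrm{conv}\,\Gamma(t,x)$, which Aumann's theorem then absorbs, plus a measurable-selection argument --- none of which is an obstruction, but all of which is machinery the paper's two-solution splitting avoids.
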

To prove this theorem we proceed in analogy with Theorem $1$ in
\cite{ribeiro10}. The proof relies on the fact that the channel's
cumulative distribution function (cdf) is continuous and thus no
channel realization has strictly positive probability. It uses the
definition of nonatomic measures and the Lyapunov's convexity
theorem \cite{ribeiro10}.
\begin{IEEEproof}
To prove the zero duality gap, we consider a perturbed version of
the problem $P_1$, obtained by perturbing the constraints used to
define the Lagrangian. Let $P_1(\Delta)$ be the function that
assigns to each perturbation set
$\Delta=(\{\Delta_{i,d}^1\}_{\forall i,d},
\{\Delta_{i,j}^2\}_{\forall (i,j)}, \{\Delta_{i}^3\}_{\forall i})$,
the solution of the following perturbed optimization problem.
\begin{eqnarray}
P_1(\Delta)=\max_{\Lambda \in \mathcal{U}_{\lambda}^F, ~R \in
\mathcal{U}_{r}^{E\times (N-1)},~ P \in \mathcal{U}_{P}^{E}}
\EE_{s,x}\left[\int_s^T \left(\sum_{i,d:i\in S_r(d)}
U_i^d(\lambda_i^d(t),t)-
\sum_{(i,j)}J_{ij}(P_{ij}(t))\right)dt\right]\nonumber \\
s.t.\nonumber\\
\EE_{s,x}\left[\int_s^T \lambda_i^d(t)dt +\int_s^T \sum_{j:i \in
\mathcal{R}(j,d)} r_{ji}^d(t)dt-\int_s^T
\sum_{j\in \mathcal{R}(i,d)} r_{ij}^d(t)dt\right]\leq \Delta_{i,d}^1,~ \forall i,d \label{eq:constr1per}\\
\EE_{s,x}\left[\int_s^T \left(\sum_{d} r_{ij}^d(t)-
C_{ij}(P(t))\right)dt\right]\leq \Delta_{i,j}^2, ~\forall (i,j)\in
\mathcal{E}, ~ ~\EE_{s,x}\left[\int_s^T\sum_{j\in
\mathcal{N}^{out}_i}P_{ij}(t)dt\right]- P_{i,\max}\leq \Delta_{i}^3,
~\forall i \label{eq:constr3per}
\end{eqnarray}\noindent i.e. the constraints can be violated by
$\Delta$ amounts. In order to prove zero duality gap, we should show
that the function $P_1(\Delta)$ is a concave function of $\Delta$
\cite{ribeiro10}.

Let $\underline{\Delta}=(\{\underline{\Delta}_{i,d}^1\}_{\forall
i,d}, \{\underline{\Delta}_{i,j}^2\}_{\forall (i,j)},
\{\underline{\Delta}_{i}^3\}_{\forall i})$,
$\overline{\Delta}=(\{\overline{\Delta}_{i,d}^1\}_{\forall i,d},
\{\overline{\Delta}_{i,j}^2\}_{\forall (i,j)},
\{\overline{\Delta}_{i}^3\}_{\forall i})$ be two arbitrary sets of
perturbations with respective optimal values
$\underline{P_1}=P_1(\underline{\Delta})$,
$\overline{P_1}=P_1(\overline{\Delta})$ and respective solutions
$(\underline{\Lambda}, \underline{R}, \underline{P})$ and
$(\overline{\Lambda}, \overline{R}, \overline{P})$. Then, for an
arbitrary $a\in [0,1]$, we define the perturbation $\hat{\Delta}=a
\underline{\Delta}+(1-a) \overline{\Delta}$ and for feasible
solutions $(\hat{\Lambda}, \hat{R}, \hat{P})$, i.e. satisfying the
constraints (\ref{eq:constr1per}), (\ref{eq:constr3per}), we need to
show
\begin{equation} P_1(\hat{\Delta})= P_1(a \underline{\Delta}+(1-a)
\overline{\Delta})\geq a P_1(\underline{\Delta})+(1-a)
P_1(\overline{\Delta}).\label{eq:convex}\end{equation}

Consider the set of all possible state ($X$) realizations
$\mathcal{H}$ and the Borel field, $\mathbb{B}$, on $\mathcal{H}$.
For $A \in \mathbb{B}$, let $\EE_{s,x}^A$ be the expected value
restricted on channel realizations included in $A$. We define the
following measures.
\begin{eqnarray}
\theta_{id}(A)=\left[\EE_{s,x}^A\left[\int_s^T
U_i^d(\underline{\lambda}_i^d(t),t)dt\right],
\EE_{s,x}^A\left[\int_s^T U_i^d(\overline{\lambda}_i^d(t),t)dt
\right] \right],~ \forall i,d:i \in S_r(d),\\
\phi_{id}(A)=\left[\EE_{s,x}^A\left[\int_s^T
\underline{\lambda}_i^d(t)dt\right], \EE_{s,x}^A\left[\int_s^T
\overline{\lambda}_i^d(t)dt \right] \right],~ \forall i,d,\\
w_{ij}(A)=\left[\EE_{s,x}^A\left[\int_s^T
C_{ij}(\underline{P}(t))dt\right], \EE_{s,x}^A\left[\int_s^T
C_{ij}(\overline{P}(t))dt \right] \right],~ \forall (i,j)\in
\mathcal{E},\\ v_{ij}(A)=\left[\EE_{s,x}^A\left[\int_s^T
J_{ij}(\underline{P}_{ij}(t))dt\right], \EE_{s,x}^A\left[\int_s^T
J_{ij}(\overline{P}_{ij}(t))dt \right] \right],~ \forall (i,j) \in
\mathcal{E},\\ \xi_{ij}(A)=\left[\EE_{s,x}^A\left[\int_s^T
\underline{P}_{ij}(t)dt\right], \EE_{s,x}^A\left[\int_s^T
\overline{P}_{ij}(t)dt \right] \right],~ \forall (i,j) \in
\mathcal{E},
\end{eqnarray}\noindent while we also define
$\theta_{id}(\varnothing)=\phi_{id}(\varnothing)=w_{ij}(\varnothing)=v_{ij}(\varnothing)=\xi_{ij}(\varnothing)=0$.
These measures are nonatomic \cite{ribeiro10} since the channel cdf
is continuous and all the control variables are bounded. Thus, there
are no channel realizations with positive measure, i.e.,
$\theta_{id}(A)=\phi_{id}(A)=w_{ij}(A)=v_{ij}(A)=\xi_{ij}(A)=0$ for
all singleton sets $A \in \mathcal{H}$. Let $W(A)$ be the vector
measure expressing collectively all measures
$\theta_{id},\phi_{id},w_{ij},v_{ij},\xi_{ij}$, which is also
obviously nonatomic. Then from Lyapunov's convexity theorem
\cite{ribeiro10}, the range of $W$ is convex. Therefore, the value
$w_0=a W(\mathcal{H})+(1-a) W(\varnothing)=a W(\mathcal{H})$ belongs
to the range of values of $W$. As a result there exists $A_0 \in
\mathbb{B}$ such that $W(A_0)=a W(\mathcal{H})$, i.e.,
$\theta_{id}(A_0)=a \theta_{id}(\mathcal{H}),~\phi_{id}(A_0)=a
\phi_{id}(\mathcal{H})$, etc. Moreover, due to the additivity of
measures, for the complement of $A_0$, $A_0^c$, we have $W(A_0^c)=
W(\mathcal{H})-W(A_0)=(1-a)W(\mathcal{H})$. Then, we define the
following controls for the new perturbation $\hat{\Delta}$.
\begin{eqnarray}
\hat{r}_{ij}^d(t)=a \underline{r}_{ij}^d(t)+ (1-a)
\overline{r}_{ij}^d(t), ~\forall d,(i,j)\in \mathcal{E}, t \in
[s,T], ~\mathbf{P}-a.s.
\end{eqnarray}
\begin{minipage}{0.17\linewidth}
\begin{eqnarray}\hat{\lambda}_{i}^d(t)=\begin{cases}\underline{\lambda}_{i}^d(t)
~\text{within}~ A_0 \nonumber\\
\overline{\lambda}_{i}^d(t)~ \text{within}~ A_0^c
\end{cases}
\end{eqnarray}\end{minipage}\begin{minipage}{1\linewidth}
\begin{eqnarray}\hat{P}_{ij}(t)=\begin{cases}\underline{P}_{ij}(t)
~\text{within}~ A_0\nonumber \\ \overline{P}_{ij}(t)~ \text{within}~
A_0^c
\end{cases}
\forall i,d:i \in S_r(d), \forall (i,j)\in \mathcal{E}, t \in
[s,T],~ \mathbf{P}-a.s.
\end{eqnarray}\end{minipage}\noindent Obviously since $\underline{\Lambda},~ \overline{\Lambda}~ \in \mathcal{U}_{\lambda}^F, ~\underline{R}, ~\overline{R} ~\in
\mathcal{U}_{r}^{E\times (N-1)},~ \underline{P}, ~\overline{P}~ \in
\mathcal{U}_{P}^{E}$, it also holds that $\hat{\Lambda} \in
\mathcal{U}_{\lambda}^F, ~\hat{R} \in \mathcal{U}_{r}^{E\times
(N-1)},~ \hat{P} \in \mathcal{U}_{P}^{E}$. Now, based on the above,
we check if the controls defined above for the perturbation
$\hat{\Delta}$ satisfy Ineqs. (\ref{eq:constr1per}),
(\ref{eq:constr3per}) and if Ineq. (\ref{eq:convex}) holds. For the
constraint (\ref{eq:constr1per}), we have:
\begin{eqnarray}
\EE_{s,x}\left[\int_s^T \left(\hat{\lambda}_i^d(t)+\sum_{j:i \in
\mathcal{R}(j,d)} \hat{r}_{ji}^d(t)- \sum_{j\in \mathcal{R}(i,d)}
\hat{r}_{ij}^d(t)\right)dt\right]= \EE_{s,x}^{A_0}\left[\int_s^T
\underline{\lambda}_i^d(t)dt\right] +
\EE_{s,x}^{A_0^c}\left[\int_s^T
\overline{\lambda}_i^d(t)dt\right]+\nonumber\\
+\EE_{s,x}\left[\int_s^T \sum_{j:i \in \mathcal{R}(j,d)}\left( a
\underline{r}_{ji}^d(t)+ (1-a)
\overline{r}_{ji}^d(t)\right)dt-\int_s^T \sum_{j\in
\mathcal{R}(i,d)} \left(a \underline{r}_{ij}^d(t)+ (1-a)
\overline{r}_{ij}^d(t)\right)dt\right]=\nonumber\\
a\EE_{s,x}\left[\int_s^T \underline{\lambda}_i^d(t)dt +\int_s^T
\sum_{j:i \in \mathcal{R}(j,d)} \underline{r}_{ji}^d(t)dt -\int_s^T
\sum_{j\in \mathcal{R}(i,d)}
\underline{r}_{ij}^d(t)dt\right]\nonumber\\+(1-a)\EE_{s,x}\left[\int_s^T
\overline{\lambda}_i^d(t)dt +\int_s^T \sum_{j:i \in
\mathcal{R}(j,d)} \overline{r}_{ji}^d(t)dt -\int_s^T \sum_{j\in
\mathcal{R}(i,d)} \overline{r}_{ij}^d(t)dt\right]\leq
a\underline{\Delta}_{i,d}^1 +(1-a) \overline{\Delta}_{i,d}^1=
\hat{\Delta}_{i,d}^1.
\end{eqnarray}
With respect to the first constraint in (\ref{eq:constr3per}), we
have:
\begin{eqnarray}
\EE_{s,x}\left[\int_s^T \sum_{d} \hat{r}_{ij}^d(t)dt\right]-
\EE_{s,x}\left[\int_s^T C_{ij}(\hat{P}(t))dt\right]=%
%\EE_{s,x}\left[\int_s^T \sum_{d} \left(a \underline{r}_{ij}^d(t)+
%(1-a) \overline{r}_{ij}^d(t)\right)dt\right]\nonumber \\-
%\EE_{s,x}^{A_0}\left[\int_s^T
%C_{ij}(\underline{P}(t))dt\right]-\EE_{s,x}^{A_0^c}\left[\int_s^T
%C_{ij}(\overline{P}(t))dt\right]=
%\EE_{s,x}\left[\int_s^T \sum_{d} a \underline{r}_{ij}^d(t)+ (1-a)
%\overline{r}_{ij}^d(t)dt\right]- a\EE_{s,x}\left[\int_s^T
%C_{ij}(\underline{P}(t))dt\right]-(1-a)\EE_{s,x}\left[\int_s^T
%C_{ij}(\overline{P}(t))dt\right]=\nonumber \\
a\EE_{s,x}\left[\int_s^T \sum_{d} \underline{r}_{ij}^d(t)dt-\int_s^T
C_{ij}(\underline{P}(t))dt\right]\nonumber \\+
(1-a)\EE_{s,x}\left[\int_s^T \sum_{d}
\overline{r}_{ij}^d(t)dt-\int_s^T
C_{ij}(\overline{P}(t))dt\right]\leq a\underline{\Delta}_{i,j}^2
+(1-a) \overline{\Delta}_{i,j}^2= \hat{\Delta}_{i,j}^2.
\end{eqnarray}
Similarly, with respect to the second constraint in
(\ref{eq:constr3per}), we have:
\begin{eqnarray}
\EE_{s,x}\left[\int_s^T\sum_{j\in
\mathcal{N}^{out}_i}\hat{P}_{ij}(t)dt\right]- P_{i,\max}=
\EE_{s,x}^{A_0}\left[\int_s^T\sum_{j\in \mathcal{N}^{out}_i}
\underline{P}_{ij}(t)dt\right]+\EE_{s,x}^{A_0^c}\left[\int_s^T\sum_{j\in
\mathcal{N}^{out}_i} \overline{P}_{ij}(t)dt\right]-
P_{i,\max}=\nonumber\\ a\EE_{s,x}\left[\int_s^T\sum_{j\in
\mathcal{N}^{out}_i}\underline{P}_{ij}(t)dt\right]+(1-a)\EE_{s,x}\left[\int_s^T\sum_{j\in
\mathcal{N}^{out}_i}\overline{P}_{ij}(t)dt\right]- P_{i,\max}\leq
a\underline{\Delta}_{i}^3 +(1-a) \overline{\Delta}_{i}^3=
\hat{\Delta}_{i}^3.
\end{eqnarray}
Finally, $P_1(\hat{\Delta})\geq \EE_{s,x}\left[\int_s^T
\left(\sum_{i,d:i\in S_r(d)} U_i^d(\hat{\lambda}_i^d(t),t)-
\sum_{(i,j)}J_{ij}(\hat{P}_{ij}(t))\right)dt\right]$, i.e.,
\begin{eqnarray}P_1(\hat{\Delta})\geq= \EE_{s,x}^{A_0}\left[\int_s^T \left(\sum_{i,d:i\in
S_r(d)}
U_i^d(\underline{\lambda}_i^d(t),t)-\sum_{(i,j)}J_{ij}(\underline{P}_{ij}(t))\right)dt\right]\nonumber\\+
\EE_{s,x}^{A_0^c}\left[\int_s^T \left( \sum_{i,d:i\in S_r(d)}
U_i^d(\overline{\lambda}_i^d(t),t)-\sum_{(i,j)}J_{ij}(\overline{P}_{ij}(t))\right)dt\right]=
%a\EE_{s,x}\left[\int_s^T \left(\sum_{i,d:i\in S_r(d)}
%U_i^d(\underline{\lambda}_i^d(t),t)-
%\sum_{(i,j)}J_{ij}(\underline{P}_{ij}(t))\right)dt\right]+(1-a)\EE_{s,x}\left[\int_s^T
%\left(\sum_{i,d:i\in S_r(d)} U_i^d(\overline{\lambda}_i^d(t),t)-
%\sum_{(i,j)}J_{ij}(\overline{P}_{ij}(t))\right)dt\right]=\nonumber\\
aP_1(\underline{\Delta})+(1-a)P_1(\overline{\Delta}),
\end{eqnarray}\noindent which concludes the proof.
\end{IEEEproof}

\subsection{Problem Solution}
Since the dual gap corresponding to the problem $P_1$ is null, we
can obtain its optimal value by solving its dual problem via a
subgradient methodology \cite{bertsekas09}. The subgradient
algorithm, repeatedly renews the Lagrange multipliers until
converging to their optimal solutions. We use the symbol
$\eta=\{0,1,..\}$ for the repetitions of the subgradient algorithm.
Then, the renewal equations of the Lagrange multipliers take the
form:
\begin{eqnarray}\label{eq:lagrangea}
\mu_i^d(\eta+1)=\left\{\mu_i^d(\eta)+\kappa(\eta)\cdot
\EE_{s,x}\left[\int_s^T \lambda_i^{d*}(t)dt +\int_s^T \sum_{j:i \in
\mathcal{R}(j,d)} r_{ji}^{d*}(t)dt-\int_s^T
\sum_{j\in \mathcal{R}(i,d)} r_{ij}^{d*}(t)dt\right]\right\}^+,~ \forall i,d,\\
l_{ij}(\eta+1)=\left\{l_{ij}(\eta)+\kappa(\eta)\cdot
\EE_{s,x}\left[\int_s^T \sum_{d} r_{ij}^{d*}(t)dt-\int_s^T
C_{ij}(P^*(t))dt\right]\right\}^+, ~\forall (i,j)\in\mathcal{E},\label{eq:lagrangeb} \\
\nu_i(\eta+1)=\left\{\nu_i(\eta)+\kappa(\eta)\cdot
\EE_{s,x}\left[\int_s^T\sum_{j\in \mathcal{N}^{out}_i}P_{ij}^*(t)dt-
P_{i,\max}\right]\right\}^+,~\forall i.\label{eq:lagrangec}
\end{eqnarray}\noindent where $\left\{\right\}^+$ denotes projection to $[0,\infty)$ and the values $\mu_i^d(0)\geq 0, ~\forall i,d, ~l_{ij}(0)\geq 0,~\forall (i,j)\in
\mathcal{E},~\nu_i(0)\geq 0, ~\forall i$, are considered given. The
subgradient method is known to converge to a close neighborhood of
the optimal values for the Lagrange multipliers if constant
step-size, $\kappa(\eta)$, is used, while diminishing, non summable
but square summable step size allows for convergence to the optimal
values \cite{bertsekas09}, \cite{gatsis11}. The values of the
stochastic processes $\lambda_i^{d*},~\forall
i,d,~r_{ij}^{d*},~\forall (i,j),d,~P_{ij}^*,~\forall (i,j),$ are
computed while obtaining the dual function (Eq. (\ref{eq:dual}))
with the current solution for the Lagrange multipliers, i.e. for the
iteration $\eta$. For performing this maximization, we can observe
that one cannot achieve a better objective value than the one
achieved by choosing at each time $t\in[s,T]$, each control variable
optimally as a function of $X(t)$ and the current values of the
Lagrange multipliers (iteration $\eta$). Therefore, at the iteration
$\eta$, given $\mu_i^d(\eta), ~\forall i,d, ~l_{ij}(\eta)~\forall
(i,j)\in \mathcal{E},~\nu_i(\eta), ~\forall_i$, the controls are
computed as follows:
\begin{itemize}
\item The optimal $\lambda_i^{d*},~\forall i,d: i \in S_r(d)$, at the transport layer, are computed source-wise by
\begin{equation}\label{eq:lamda}
\frac{\vartheta U_i^d(\lambda_i^d,t)}{\vartheta
\lambda_i^d}-\mu_i^d(\eta)=0, \forall~t\in [s,T],
\end{equation}\noindent while taking into account that $\lambda_i^{d*} \in U_{\lambda}$. From Eq. (\ref{eq:lamda}), it is
observed that $\lambda_i^{d*}$ is time-varying but not random as it
depends only on the deterministic Lagrange multiplier
$\mu_i^d(\eta)$ (which is constant in time $t\in [s,T]$) and the
deterministic time-varying utility function $U_i^d$ but not on the
channel state $X(t)$. In the case that
$U_i^d(\lambda_i^d,t),~\forall i,d:i\in S_r(d)$ are invariant with
$t$ then $\lambda_i^{d*},~\forall i,d:i\in S_r(d)$ are constants
over $[s,T]$, too.
\item The optimal routing variables $r_{ij}^{d*},~\forall (i,j),d$ (network layer) are computed by
\begin{equation}\label{eq:r}
\max_{R}\sum_d \sum_{(i,j)|j\in
R(i,d)}r_{ij}^d(\mu_i^d(\eta)-\mu_j^d(\eta)-l_{ij}(\eta)),~
\forall~t\in [s,T],
\end{equation}\noindent i.e. $r_{ij}^{d*}(t)=R_{\max},~\forall~t\in [s,T]$, if $(\mu_i^d(\eta)-\mu_j^d(\eta)-l_{ij}(\eta))>0$, while it is
observed (from Eq. (\ref{eq:r})) that each $r_{ij}^{d*}$ is constant
in time.
\item The optimal transmission power values at the physical layer, $P^*$, are
computed via solving:
\begin{equation}\label{eq:p}
\max_{P}
\sum_{(i,j)}\left(-J_{ij}(P_{ij})+l_{ij}(\eta)C_{ij}(P)-\nu_i(\eta)
P_{ij}\right), \forall~t\in [s,T],
\end{equation}\noindent while considering the $U_{P}$ constraints (Section \ref{sec:sysmodel}). From Eq. (\ref{eq:p}), it is
observed that $P$ is a stochastic process since the link capacity
$C_{ij},~\forall (i,j)$ depends on the stochastic process $X(t)$
representing the wireless channels (Section \ref{sec:sysmodel}).
\end{itemize}
The computation of the expected values involved in the Lagrange
multipliers' renewal equations requires Monte Carlo simulations.
However, since the controls $\Lambda^*,~R^*$, have been shown to be
deterministic, the expected value involved in the renewal equation
of the Lagrange multipliers $\mu_i^d, ~\forall i,d$ (Eq.
(\ref{eq:lagrangea})) is superfluous. For the rest of the Lagrange
multipliers, the expected values are obtained via the following
algebraic computations.

Firstly, the processes \(\{X_{ij}(t):\ t\ge 0\}\) are discretized \cite{higham01}. We
compute $\delta t=\frac{T-s}{n}$, where $n$ is a design parameter
representing the number of samples of the channel over the time
interval $[s,T]$ of the network's operation, and we sample on 
the time instants $\{\tau_b=s+b \cdot \delta t\}_{b=1:n}$. In view of
(\ref{eq:sdepathloss}) it is not hard to see that for every $i,j$ and $b$ we have
\[
X_{ij}(\tau_b)=\rho_{ij}(b)X_{ij}(\tau_{b-1})+\zeta_{ij}(b)+Z_{ij}(b)
\]
where 
\[
\rho_{ij}(b)=\exp(-\int_{\tau_{b-1}}^{\tau_b}\beta_{ij}(s)ds)
\]
\[
\zeta_{ij}(b)=\int_{\tau_{b-1}}^{\tau_b}\beta_{ij}(s)\gamma_{ij}(s)\exp\big(-\int_{s}^{\tau_b}\beta_{ij}(q)dq\big)ds
\]
and
\[
Z_{ij}(b)=\int_{\tau_{b-1}}^{\tau_b}\delta_{ij}(s)\exp\big(-\int_{s}^{\tau_b}\beta_{ij}(q)dq\big)dW_{ij}(s).
\]
Note that $\{Z_{ij}(b)\}_{i,j,b}$ are independent random variables with distributions ${\cal N}(0,\sigma_{ij}^2(b))$, where
\[
\sigma_{ij}^2(b)=\int_{\tau_{b-1}}^{\tau_b}\delta_{ij}^2(s)\exp(-2\int_{s}^{\tau_b}\beta_{ij}(q)dq)ds.
\]
The discretized scheme then becomes for every $(i,j)$:
\begin{equation}
\label{eq:sdenumerical}
X_{ij}(\tau_b)=\rho_{ij}(b)X_{ij}(\tau_{b-1})+\zeta_{ij}(b)+\sigma_{ij}(b) \xi_{ij}(b),\ b=1,2,\ldots,n,\quad X_{ij}(s)=x_{ij0}
\end{equation}
where $\{\xi_{ij}(b)\}_{i,j,b}$ are independent samples from a standard normal random variable. After numerically
computing the solution of the SDE (\ref{eq:sdepathloss}), we compute
$P^*(\tau_b)$ for each $b$ from Eq. (\ref{eq:p}) and we use a
Riemann sum approximation for a sample of the integral $\int_{s}^{T}
C_{ij}(P^*(t))dt$, that is
 $\int_{s}^{T} C_{ij}(P^*(t))dt\simeq\sum_{b=0}^{n-1}
C_{ij}(P^*(\tau_b))\delta t$, where
${C_{ij}}({P^*}({\tau _b})) = {B_{ij}}{\log _2}\left( {1 +
\frac{{{e^{K{X_{ij}}({\tau _b})}}P_{ij}^*({\tau _b})}}{{{N_0} +
\sum\limits_{(k,l) \in {{\cal I}_{ij}}} {{e^{K{X_{kj}}({\tau
_b})}}P_{kl}^*({\tau _b})} }}} \right)$ (Sections
\ref{sec:channelmodels}, \ref{sec:sysmodel}). Finally, we repeat the
above procedure $M$ times to obtain $M$ independent samples of the
preceding integral, and we average these samples to estimate the
expected capacity of link $(i,j)$ over the time interval $[s,T]$,
i.e. $\EE_{s,x}\left[\int_{s}^{T} C_{ij}(P^*(t))dt\right]$,
appearing in the Eq. (\ref{eq:lagrangeb}). 
Similarly we obtain
$\EE_{s,x}\left[\int_{s}^{T} P_{ij}^*(t)dt\right]$. Since the
computation of $P^*$ from Eq. (\ref{eq:p}) involves the Lagrange
multipliers' values, the Monte Carlo computations of the expected
values should be performed at each iteration of the subgradient
algorithm. Notably, $\rho_{ij}(b),\zeta_{ij}(b),\sigma_{ij}(b)$ are deterministic so we only need to compute them once. 
%An exactly similar sampling procedure with Eq.
%(\ref{eq:sdenumerical}) may be followed for the inphase and
%quadrature components in case of STF (Section
%\ref{sec:channelmodels}).

The optimal power allocation problem at the physical layer, i.e.,
the solution of Eq. (\ref{eq:p}) determines the complexity of the
whole problem since everything else is simple algebraic
computations. Indeed the computations of Eqs. (\ref{eq:lamda}),
(\ref{eq:r}) can be distributed to the sources and links
correspondingly. This cross-terminal optimization problem at the
physical layer constitutes an important challenge in wireless
networking \cite{ribeiro10} which is treated in other works in
literature \cite{ribeiro10}, \cite{gatsis11} and is out of the scope
of this paper. In the next section, we formulate and solve the same
problem in the case of orthogonal access to the wireless medium
where the capacity functions take much simpler concave forms leading
to tractable analytic solutions.

We summarize below the steps of the algorithm proposed in this
section for obtaining $D_1^*$.
\begin{enumerate}
\item Initialize the Lagrange multipliers, $\eta=0,~\mu_i^d(0), ~\forall i,d, ~l_{ij}(0),~\forall (i,j)\in
\mathcal{E},~\nu_i(0), ~\forall_i$.
\item Compute $\lambda_i^{d*},~\forall
i,d,~r_{ij}^{d*},~\forall (i,j),d$ using Eqs. (\ref{eq:lamda}),
(\ref{eq:r}) respectively.
\item Compute the expected values $\EE_{s,x}\left[\int_{s}^{T} C_{ij}(P^*(t))dt\right],~\EE_{s,x}\left[\int_{s}^{T} P_{ij}^*(t)dt\right],~\forall (i,j)$, as described.
\item Compute $\mu_i^d(\eta+1), ~\forall i,d, ~l_{ij}(\eta+1),~\forall (i,j)\in
\mathcal{E},~\nu_i(\eta+1), ~\forall_i$ from Eqs.
(\ref{eq:lagrangea}), (\ref{eq:lagrangeb}), (\ref{eq:lagrangec}) and
set $\eta \leftarrow \eta+1$.
\item Repeat steps $2,~3,~4$ until convergence.
\end{enumerate}
\subsection{Discussion} It is important to note that the time scale of the
renewal of the Lagrange multipliers should be distinguished from the
time interval $[s,T]$ of the network's operation. In principle, the
above algorithm should run off-line, i.e. prior to the network
operation to determine the optimal source rates, routing variables
and Lagrange multipliers and afterwards, the online network
operation will be designed based on these optimal values and the
solution of the cross-terminal power allocation problem of Eq.
(\ref{eq:p}). Note that convergence of the dual variables close to
their optimal values does not imply convergence of the primal
variables except if the primal variables change continuously with
respect to the optimal Lagrange multipliers (e.g. source rates).
Following the approach of \cite{gatsis11}, we can compute optimal
routing variables while performing the subgradient iterations.
Specifically let as assume that $N_o$ is the total number of
subgradient iterations, while the index $\eta \in {0,...,N_o-1}$ is
used to distinguish each one iteration. Then, if applying
$\bar{r}_{ij}^d(t,N_o)=\frac{\sum_{\eta=0}^{N_o-1}r_{ij}^{d*}(t,\eta)}{N_o},~\forall
(i,j),d$ as optimal routing variables for each $t\in [s,T]$, we can
achieve a close to the optimal value of $P_1$, using diminishing
step size. This can be proven as in \cite{gatsis11}, if we first
reformulate $P_1$ in an equivalent form replacing the objective
function by the optimization variable $P'$ and adding the constraint
$\EE_{s,x}\left[\int_s^T \left(\sum_{i,d:i\in S_r(d)}
U_i^d(\lambda_i^d(t),t)-
\sum_{(i,j)}J_{ij}(P_{ij}(t))\right)dt\right]\geq P'$. Then,
obviously, Theorem \ref{thm:1} holds. More discussion on a possible
(suboptimal) online implementation of the proposed algorithm is made
in Section \ref{sec:simulation}.

We also note that instantaneous values of the controls (online
approach - as functions of $X(t)$) during the network operation for
such a problem may be obtained via a dynamic programming solution
methodology (Hamilton-Jacobi-Bellman partial differential equation
(HJB pde)) \cite{fleming06} which adds dramatically to complexity
for a wireless multihop network (specifically the solution of the
HJB pde is completely inefficient \cite{cui14}, \cite{olama06}).

\section{Orthogonal Access to the Medium}\label{sec:orthogonal}
In this section, we redesign the problem $P_1$ allowing only
orthogonal access to the wireless medium, and thus leading to convex
link capacity forms (Section \ref{sec:sysmodel}) since the noise
from interference will become negligible. In order to achieve this,
we introduce new optimization variables for each independent set
$\iota$, denoted as $\pi_{\iota}$, expressing the activation
percentage of the corresponding independent set at time $t\in
[s,T]$, and further satisfying the relations: $ \sum_{\iota}
\pi_{\iota}(t) \leq 1, ~0\leq \pi_{\iota}(t) \leq
1,~\forall \iota, t \in [s,T].$ %i.e. the interfering links
%according to the considered interference model (thus the independent
%sets (Section \ref{sec:sysmodel})) should time-share (at a smaller
%time scale) the medium at every $t\in [s,T]$. %Note that due to the
%existence of integrals, we can only introduce continuous
%optimization variables, and this is the reason why we do not apply
%directly binary (integer) independent sets' activation variables.
 In the following, $\Pi$ stands for the collection of all
$\pi_{\iota},~\forall \iota$ and $I_n$ is the number of the
independent sets of the network's connectivity graph. Since the
channel state is a stochastic process, similarly to the definition
of the rest of the control variables (Section \ref{sec:sysmodel}),
we define the value range for each $\pi_{\iota}$, $U_{\pi}=[0,1]$,
and the corresponding feasible set
$\mathcal{U}_{\pi}=\left\{\pi:[s,T]\times \Omega\rightarrow U_{\pi}:
\pi \mathrm{~is~} \{\mathcal{F}_t\}_{t\geq s} \mathrm{~adapted}
\right\}$. Then, $\Pi \in
\mathcal{U}_{\Pi}=\{\mathcal{U}_{\pi}^{I_n}: \sum_{\iota=1}^{I_n}
\pi_{\iota}(t) \leq 1, ~\forall t \in [s,T]\}$.

The new optimization variables impose time-sharing among the
independent sets, thus they render the interference levels
negligible and the capacity of each link $(i,j)$ is given by the
concave function in Section \ref{sec:sysmodel}. The time share
corresponding to the link $(i,j)$ at $t\in [s,T]$, is given by
$\sum_{\iota: (i,j) \in \iota} \pi_{\iota}(t)$. The new optimization
problem $\mathbf{P_2}$ is formulated as:
\begin{eqnarray}
P_2:=\max_{\Lambda \in \mathcal{U}_{\lambda}^F, ~R \in
\mathcal{U}_{r}^{E\times (N-1)},~ P \in \mathcal{U}_{P}^{E}, ~\Pi
\in \mathcal{U}_{\Pi}} \EE_{s,x}\left[\int_s^T \left(\sum_{i,d:i\in
S_r(d)} U_i^d(\lambda_i^d(t),t)- \sum_{(i,j)}\sum_{\iota: (i,j) \in
\iota} \pi_{\iota}(t)J_{ij}(P_{ij}(t))\right)dt\right]\nonumber\\
s.t.\nonumber\end{eqnarray}
\begin{eqnarray}
\EE_{s,x}\left[\int_s^T \lambda_i^d(t)dt +\int_s^T \sum_{j:i \in
\mathcal{R}(j,d)} r_{ji}^d(t)dt\right]\leq \EE_{s,x}\left[\int_s^T
\sum_{j\in \mathcal{R}(i,d)} r_{ij}^d(t)dt\right],~ \forall i,d \label{const:1b}\\
\EE_{s,x}\left[\int_s^T \sum_{d} r_{ij}^d(t)dt\right]\leq
\EE_{s,x}\left[\int_s^T \sum_{\iota: (i,j)
\in \iota} \pi_{\iota}(t)C_{ij}(P_{ij}(t))dt\right], ~\forall (i,j)\in \mathcal{E}\label{const:2b}\\
\EE_{s,x}\left[\int_s^T\sum_{j\in \mathcal{N}^{out}_i}\sum_{\iota:
(i,j) \in \iota} \pi_{\iota}(t)P_{ij}(t)dt\right]\leq P_{i,\max},
~\forall i\label{const:3b}
\end{eqnarray}

The formulation  of $P_2$ is similar to the one of $P_1$ (Section
\ref{sec:nonortho}), with the difference that in $P_2$, we have
introduced the new optimization variables $\Pi \in
\mathcal{U}_{\Pi}$, the link capacities are concave and the link
transmission powers, $P_{ij}(t)$, the link costs with respect to the
transmission powers, $J_{ij}(P_{ij}(t))$, and the link capacities,
$C_{ij}(P_{ij}(t))$, are replaced by their effective values
$\sum_{\iota: (i,j) \in \iota} \pi_{\iota}(t)P_{ij}(t)$,
$\sum_{\iota: (i,j) \in \iota} \pi_{\iota}(t)J_{ij}(P_{ij}(t))$,
$\sum_{\iota: (i,j) \in \iota} \pi_{\iota}(t)C_{ij}(P_{ij}(t))$,
correspondingly \cite{marques}. $P_2$ is non-convex due to the
appearance of the control variables in multiplicative form in the
objective function and the constraints (Eqs. (\ref{const:2b}),
(\ref{const:3b})) in addition to the general forms of the utility
and cost functions. However, in a similar way as for $P_1$, it can
be shown that $P_2$ has a zero duality gap. Let $\mu_i^d\geq 0,
~\forall i,d, ~l_{ij}\geq 0,~\forall (i,j)\in \mathcal{E},~\nu_i\geq
0, ~\forall i$, be the Lagrange multipliers associated with the
constraints (\ref{const:1b}), (\ref{const:2b}), (\ref{const:3b}),
respectively. Denote with $L$ the whole set of the Lagrange
multipliers. Then, the dual function is formulated as follows:
\begin{eqnarray}\label{eq:dual2}
L_A(L)=\max_{\Lambda \in \mathcal{U}_{\lambda}^F, ~R \in
\mathcal{U}_{r}^{E\times (N-1)},~ P \in \mathcal{U}_{P}^{E},~\Pi \in
\mathcal{U}_{\Pi}} ~\EE_{s,x}\left[\int_s^T \left(\sum_{i,d:i\in
S_r(d)} U_i^d(\lambda_i^d(t),t)- \sum_{(i,j)}\sum_{\iota: (i,j) \in
\iota}
\pi_{\iota}(t)J_{ij}(P_{ij}(t))\right)dt\right]\nonumber \\
-\sum_{i,d} \mu_i^d \EE_{s,x}\left[\int_s^T \lambda_i^d(t)dt
+\int_s^T \sum_{j:i \in \mathcal{R}(j,d)} r_{ji}^d(t)dt-\int_s^T
\sum_{j\in \mathcal{R}(i,d)} r_{ij}^d(t)dt\right]
-\nonumber\\\sum_{(i,j)\in\mathcal{E}}l_{ij}\EE_{s,x}\left[\int_s^T
\left(\sum_{d} r_{ij}^d(t)-\sum_{\iota: (i,j) \in \iota}
\pi_{\iota}(t) C_{ij}(P_{ij}(t))\right)dt\right]- \sum_{i} \nu_i
\EE_{s,x}\left[\int_s^T\sum_{j\in \mathcal{N}^{out}_i}\sum_{\iota:
(i,j) \in \iota} \pi_{\iota}(t)P_{ij}(t)dt- P_{i,\max}\right].
\end{eqnarray}\noindent
Then, the dual problem is defined as $D_2:=\inf_{L}(L_A(L))$.
\begin{thm}
The problem $P_2$ has zero dual gap.\label{thm:2}
\end{thm}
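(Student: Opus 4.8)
The plan is to mirror the proof of Theorem \ref{thm:1}: I would define the perturbation function $P_2(\Delta)$ by relaxing the right-hand sides of (\ref{const:1b})--(\ref{const:3b}) by amounts $\Delta=(\{\Delta_{i,d}^1\},\{\Delta_{i,j}^2\},\{\Delta_i^3\})$, and show that $P_2(\Delta)$ is concave in $\Delta$, which by the argument of \cite{ribeiro10} is equivalent to a null duality gap. First I would fix two perturbations $\underline\Delta,\overline\Delta$ with optimal values $\underline{P_2},\overline{P_2}$ attained by controls $(\underline\Lambda,\underline R,\underline P,\underline\Pi)$ and $(\overline\Lambda,\overline R,\overline P,\overline\Pi)$, and for $a\in[0,1]$ set $\hat\Delta=a\underline\Delta+(1-a)\overline\Delta$. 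The goal is to exhibit a feasible control for $\hat\Delta$ whose objective is at least $a\underline{P_2}+(1-a)\overline{P_2}$, i.e.\ the analogue of (\ref{eq:convex}).

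The one genuinely new feature relative to Theorem \ref{thm:1} is that the effective quantities $\sum_{\iota:(i,j)\in\iota}\pi_\iota(t)C_{ij}(P_{ij}(t))$, $\sum_{\iota:(i,j)\in\iota}\pi_\iota(t)J_{ij}(P_{ij}(t))$ and $\sum_{\iota:(i,j)\in\iota}\pi_\iota(t)P_{ij}(t)$ are \emph{bilinear} in $\Pi$ and $P$, so a plain convex combination of the two control tuples would not reproduce the convex combination of the objective and the constraint residuals. I would resolve this exactly as before, by treating each nonlinear functional as a component of a single nonatomic vector measure on the space $\mathcal H$ of channel realizations. Keeping $\theta_{id},\phi_{id}$ for the utilities and source rates, I would redefine $w_{ij}(A)$, $v_{ij}(A)$, $\xi_{ij}(A)$ to be the $A$-restricted expectations of $\int_s^T\sum_{\iota:(i,j)\in\iota}\underline\pi_\iota C_{ij}(\underline P)\,dt$, $\int_s^T\sum_{\iota:(i,j)\in\iota}\underline\pi_\iota J_{ij}(\underline P_{ij})\,dt$, $\int_s^T\sum_{\iota:(i,j)\in\iota}\underline\pi_\iota\underline P_{ij}\,dt$ (and their barred counterparts), so the bilinear products enter the measure as single indivisible quantities. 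Since the channel cdf is continuous and all controls are bounded, each measure is nonatomic, hence so is the stacked vector measure $W$, and Lyapunov's convexity theorem \cite{ribeiro10} yields a Borel set $A_0$ with $W(A_0)=aW(\mathcal H)$ and, by additivity, $W(A_0^c)=(1-a)W(\mathcal H)$.

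With $A_0$ fixed I would build the candidate control by switching \emph{all} nonlinearly-appearing controls together: set $(\hat\Lambda,\hat P,\hat\Pi)$ equal to $(\underline\Lambda,\underline P,\underline\Pi)$ on $A_0$ and to $(\overline\Lambda,\overline P,\overline\Pi)$ on $A_0^c$, while the linearly-appearing routing variables are combined as $\hat r_{ij}^d(t)=a\underline r_{ij}^d(t)+(1-a)\overline r_{ij}^d(t)$. Feasibility $\hat\Pi\in\mathcal U_\Pi$ is immediate: on each piece $\hat\Pi$ coincides with an already-feasible control, so $\sum_\iota\hat\pi_\iota(t)\le 1$ holds pointwise, and adaptedness and the range constraints are inherited exactly as in Theorem \ref{thm:1}. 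Because the bilinear terms were encoded directly into $W$, restricting the expectation to $A_0$ (resp.\ $A_0^c$) returns precisely $a$ (resp.\ $1-a$) times the original integral, so each constraint residual and the objective collapse to the corresponding convex combination; substituting into (\ref{const:1b})--(\ref{const:3b}) bounds the residuals by $\hat\Delta_{i,d}^1,\hat\Delta_{i,j}^2,\hat\Delta_i^3$ and the objective below by $a\underline{P_2}+(1-a)\overline{P_2}$.

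The main obstacle I anticipate is the bookkeeping around the bilinear $\pi_\iota P_{ij}$ couplings, namely verifying that it suffices to switch $\Pi$ and $P$ \emph{jointly} on the single set $A_0$ rather than requiring separate splitting sets for each product. This is exactly what stacking all the products into one vector measure secures, since Lyapunov's theorem acts on the range of $W$ simultaneously. Once this is established, the flow-conservation constraint (\ref{const:1b}) is handled by the mixed piecewise/linear construction just as the first constraint of Theorem \ref{thm:1}, and the capacity and power constraints (\ref{const:2b})--(\ref{const:3b}) follow verbatim with $C_{ij},J_{ij},P_{ij}$ replaced by their $\pi$-weighted effective values, concluding the proof.
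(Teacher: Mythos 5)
Your proposal is correct and takes essentially the same route as the paper's (Appendix B) proof, which the authors describe as a direct replica of the Theorem~\ref{thm:1} argument: you reproduce the perturbation-function/Lyapunov-convexity construction and correctly identify that the only new ingredient is the bilinear $\pi_{\iota}$-weighted capacity, cost and power terms, which you absorb as indivisible components of the nonatomic vector measure so that $\Pi$ and $P$ are switched jointly on the single set $A_0$. Nothing further is needed.
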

The proof is briefly described in Appendix B, provided as a
supplementary file, as it is very similar to the proof of Theorem
\ref{thm:1}. Now, we obtain the optimal value of the problem $P_2$
via solving its dual. The renewal equations of the Lagrange
multipliers are given by ($\eta=\{0,1,2..\}$):
\begin{eqnarray}\label{eq:lagrange2}
\mu_i^d(\eta+1)=\left\{\mu_i^d(\eta)+\kappa(\eta)\cdot
\EE_{s,x}\left[\int_s^T \lambda_i^{d*}(t)dt +\int_s^T \sum_{j:i \in
\mathcal{R}(j,d)} r_{ji}^{d*}(t)dt-\int_s^T \sum_{j\in
\mathcal{R}(i,d)} r_{ij}^{d*}(t)dt\right]\right\}^+,~ \forall
i,d,\end{eqnarray}
\begin{eqnarray}
l_{ij}(\eta+1)=\left\{l_{ij}(\eta)+\kappa(\eta)\cdot
\EE_{s,x}\left[\int_s^T \sum_{d} r_{ij}^{d*}(t)dt-\int_s^T
\sum_{\iota: (i,j) \in \iota} \pi_{\iota}^*(t)C_{ij}(P_{ij}^*(t))dt\right]\right\}^+, ~\forall (i,j)\in\mathcal{E}, \label{eq:lagrange2b}\\
\nu_i(\eta+1)=\left\{\nu_i(\eta)+\kappa(\eta)\cdot
\EE_{s,x}\left[\int_s^T\sum_{j\in \mathcal{N}^{out}_i}\sum_{\iota:
(i,j) \in \iota} \pi_{\iota}^*(t)P_{ij}^*(t)dt-
P_{i,\max}\right]\right\}^+,~\forall i.\label{eq:lagrange2c}
\end{eqnarray}\noindent with given values $\mu_i^d(0)\geq 0, ~\forall i,d, ~l_{ij}(0)\geq 0,~\forall (i,j)\in
\mathcal{E},~\nu_i(0)\geq 0, ~\forall_i$.

The optimal values of the stochastic processes
$\lambda_i^{d*},~\forall i,d,~r_{ij}^{d*},~\forall (i,j),d$, for
each $\eta$ are computed by Eqs. (\ref{eq:lamda}), (\ref{eq:r})
correspondingly and the same observations hold. Regarding the
optimal values $P_{ij}^*,~\forall (i,j)\in
\mathcal{E},~\pi_{\iota}^*,~\forall \iota$, for each $\eta$, they
are obtained by solving $\forall~t\in [s,T]$:
\begin{eqnarray}\label{eq:p2}
\max_{P,\Pi} \sum_{(i,j)}\sum_{\iota: (i,j) \in \iota} \pi_{\iota}
\left(l_{ij}(\eta)C_{ij}(P_{ij})-J_{ij}(P_{ij})-\nu_i(\eta)
P_{ij}\right)= \max_{P,\Pi} \sum_{\iota}\pi_{\iota} \sum_{(i,j) \in
\iota} \left(l_{ij}(\eta)C_{ij}(P_{ij})-J_{ij}(P_{ij})-\nu_i(\eta)
P_{ij}\right),
\end{eqnarray}\noindent which constitutes a maximum weight matching problem
over the independent sets. Specifically, for its solution, each link
$(i,j)$ computes the stochastic process $P_{ij}^*(t)$, $t\in[s,T]$
(depending on the state's, $X$, path) by maximizing
$-J_{ij}(P_{ij})+l_{ij}(\eta)C_{ij}(P_{ij})-\nu_i(\eta) P_{ij}$,
i.e. solving
\begin{equation}\label{eq:p3}
-\frac{\vartheta J_{ij}(P_{ij})}{\vartheta
P_{ij}}+l_{ij}(\eta)\frac{\vartheta C_{ij}(P_{ij})}{\vartheta
P_{ij}}-\nu_i(\eta) =0,~ \forall~t\in [s,T],
\end{equation}\noindent while taking into account the $U_P$ constraints (Section \ref{sec:sysmodel}). Then, each link $(i,j)\in \mathcal{E}$ is assigned a weight equal
to
$W_e(i,j,t)=\left(-J_{ij}(P_{ij}^*(t))+l_{ij}(\eta)C_{ij}(P_{ij}^*(t))-\nu_i(\eta)
P_{ij}^*(t)\right)$ and finally the independent set $\iota^*$
maximizing the sum $\sum_{(i,j) \in \iota}W_e(i,j,t)$ receives
$\pi_{\iota^*}(t)=1$, while $\pi_{\iota}(t)=0$ for $\iota\neq
\iota^*$ at $t\in [s,T]$. In this paper, we assume that ties break
arbitrarily, however, a study on how to break ties can be found in
\cite{marques}.

The computation of the expected values involved in the Lagrange
multipliers follows the lines of the Monte Carlo simulations
described in Section \ref{sec:nonortho} and the algorithm for
solving $D_2$ is similar to the one in Section \ref{sec:nonortho}.
Note that Eq. (\ref{eq:p3}) can be solved link-wise in a very
efficient manner, contrary to Eq. (\ref{eq:p}) which involves the
complex cross-terminal problem. The observations regarding the
convergence to the optimal Lagrange multipliers and primal values
are of similar nature to the ones of Section \ref{sec:nonortho}. At
this point we study the solution of Eq. (\ref{eq:p3}) in more detail
in order to gain more insight regarding the optimal power control.
Let us assume LTF and convex link costs of the form
$J_{ij}(P_{ij})=V P_{ij}^2$, where $V>0$ is a constant for all
$(i,j)\in \mathcal{E}$. Then, for a given $\eta$, the solution of
Eq. (\ref{eq:p3}) is explicitly given by:
\begin{eqnarray}\label{eq:p3solution}
P_{ij}^*(X_{ij}(t))=
\max\big\{0,\min\{\tilde{P}_{ij}^*(X_{ij}(t)),P_{\max}\}\big\},~\mathrm{where}
\nonumber \end{eqnarray}
\begin{eqnarray}
\tilde{P}_{ij}^*(X_{ij}(t))=\frac{1}{2}\left[-\left(N_0
e^{-KX_{ij}(t)}+\frac{\nu_i(\eta)}{2V} \right)+\sqrt{\left(N_0
e^{-KX_{ij}(t)}+\frac{\nu_i(\eta)}{2V} \right)^2-4\left(
\frac{\nu_i(\eta) N_0e^{-KX_{ij}(t)}}{2V}-\frac{l_{ij}(\eta)
B_{ij}}{2V \log(2)}\right)}\right]
\end{eqnarray}
Note that the optimally controlled power at $\eta$ never exceeds the
value
 \begin{small}
\begin{eqnarray}
\label{eq:optPowermax}P_{ij}^{* \max}=\frac{1}{2}\left(\sqrt{\left(
\frac{\nu_i(\eta)}{2V}\right)^2+\frac{4l_{ij}(\eta) B_{ij}}{2V
\log(2)}} -\frac{\nu_i(\eta)}{2V} \right),
\end{eqnarray}
\end{small}\noindent irrespectively of the $P_{\max}$ value, while this value is achieved asymptotically
when $X_{ij}(t)\rightarrow -\infty$. Note also that
$P_{ij}^*(X_{ij}(t))\rightarrow 0$ when $X_{ij}(t)\rightarrow
\infty$. Therefore, the optimal power control exploits low power
loss values created by random fluctuations to increase the link
capacity as much as possible, thus positively affecting the flows'
throughput (rates). On the other hand, transmission power is not
wasted when the power loss is high. Note that in case of orthogonal
access to the medium the aim of power control is not scheduling
(which is defined via the $\Pi$ variables) but to take advantage of
the channel when it is favorable and to avoid depleting energy when
the channel is destructive.

%Generally, we need to note that neither the utilities nor the cost
%functions are needed to take concave, convex forms respectively so
%that the problems $P_1,~P_2$ achieve zero dual gap. By defining
%appropriate measures in the proofs, zero duality gap can be shown
%for the problems $P_1,~P_2$ for general utility and cost functions
%in the objective functions. Our assumptions are aligned with the
%considered forms in literature for the utility and cost functions
%for elastic traffic in wireless multihop networks \cite{marques}.

In the following, we assume no power control and scheduling and we
prove an interesting counter-intuitive theorem regarding the
relation between the achieved utility by the network and the
channel's diffusion coefficient in the case of LTF. Absence of power
control means that in $P_2$, $J_{ij}(t)=0,~\forall (i,j), t \in
[s,T]$, the constraint of Eq. (\ref{const:3b}) is dropped and
finally, $P_{ij}(t),~\forall (i,j), t \in [s,T]$ are constant and
predefined. Absence of scheduling means that the time share of each
link is constant and predefined, e.g. $\sum_{\iota: (i,j) \in \iota}
\pi_{\iota}(t)=\zeta_{ij}\geq 0,~\forall (i,j), t \in [s,T]$.

\subsection{Utilities as Increasing Functions of the Channel's Diffusion Coefficient in the case of LTF}
\label{sec:utilities} We examine the effect of the channel's
diffusion coefficient, on the users' optimal utility. Contrary to
what is perhaps expected, we prove that an increase of a channel's
diffusion coefficient (SDE (\ref{eq:sdepathloss})) leads to
increased achieved users' utility. This result emphasizes the
importance of a more realistic model for the power loss such as the
one of SDE (\ref{eq:sdepathloss}), as opposed to the mean power loss
model used in the conventional NUM problem formulation.

\begin{thm}
Let us consider two networks $1$, $2$, one noisier than the other,
but otherwise identical. Precisely, for
$k\in\{1,2\},~\forall\,(i,j),~\forall\,t\in [t_0,T],$
\begin{eqnarray}
dX_{ij}^k(t)=\beta_{ij}(t)\big(\gamma_{ij}(t)-X_{ij}^k(t)\big)\,
dt+\delta_{ij}^k(t)\, dW_{ij}^k(t),~
X_{ij}^k(t_0)=x_{ij,0},~\mathrm{and}~\delta_{ij}^1(t)\leq
\delta_{ij}^2(t). \label{eq:sdepathlossi}
\end{eqnarray}
\noindent If $J^*_k, ~k\in\{1,2\},$ is the optimal objective value
achieved for each network, then $J^*_1 \leq J^*_2$.
\label{thm:volatilityincrease}
\end{thm}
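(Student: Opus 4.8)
The plan is to reduce the comparison of the two network-wide optima to a link-by-link comparison of expected capacities, and then to exploit the Gaussianity of the LTF state together with the convexity of the Shannon capacity as a function of the power loss in dB. First I would isolate where the channel enters the problem: in the no-power-control, no-scheduling regime we have $J_{ij}\equiv 0$, the constraint (\ref{const:3b}) is dropped, and $P_{ij}(t)\equiv P_{ij}$, $\sum_{\iota:(i,j)\in\iota}\pi_\iota(t)\equiv\zeta_{ij}$ are fixed and common to both networks. Consequently the objective $\EE_{s,x}[\int_s^T\sum_{i,d}U_i^d(\lambda_i^d(t),t)\,dt]$ does not depend on the channel at all, and the only channel-dependent quantity is the expected link capacity
$$\bar C_{ij}^k:=\EE_{s,x}\Big[\int_s^T C_{ij}(P_{ij}(t))\,dt\Big],$$
which appears only on the right-hand side of the capacity constraint (\ref{const:2b}) as $\zeta_{ij}\bar C_{ij}^k$. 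Since it enters purely as an upper bound, any $(\Lambda,R)$ feasible for network $1$ is feasible for network $2$ whenever $\bar C_{ij}^1\le \bar C_{ij}^2$ for every link; as the objective is identical and channel-free, this inclusion of feasible sets gives $J_1^*\le J_2^*$. Thus it suffices to prove the per-link inequality $\bar C_{ij}^1\le \bar C_{ij}^2$.

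By Tonelli, $\bar C_{ij}^k=\int_s^T \EE_{s,x}[C_{ij}(P_{ij})]\,dt$, so I reduce further to showing $\EE[g(X_{ij}^1(t))]\le \EE[g(X_{ij}^2(t))]$ for each fixed $t$, where $g(x):=B_{ij}\log_2(1+\frac{P_{ij}}{N_0}e^{Kx})$ and $K=-\frac{\ln 10}{10}$. Solving the linear SDE (\ref{eq:sdepathlossi}) explicitly, $X_{ij}^k(t)$ is Gaussian with mean $m_{ij}(t)=x_{ij,0}e^{-\int_{t_0}^t\beta_{ij}}+\int_{t_0}^t\beta_{ij}\gamma_{ij}e^{-\int_u^t\beta_{ij}}\,du$ and variance $(\Sigma_{ij}^k(t))^2=\int_{t_0}^t(\delta_{ij}^k(u))^2 e^{-2\int_u^t\beta_{ij}}\,du$. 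Because $\beta_{ij},\gamma_{ij}$ and the initial datum $x_{ij,0}$ are common to both networks the means coincide, while $\delta_{ij}^1\le\delta_{ij}^2$ forces $\Sigma_{ij}^1(t)\le\Sigma_{ij}^2(t)$. Hence $X_{ij}^2(t)$ is a \emph{mean-preserving spread} of $X_{ij}^1(t)$: setting $\eta:=\sqrt{(\Sigma_{ij}^2(t))^2-(\Sigma_{ij}^1(t))^2}\ge 0$ and taking $Z\sim\mathcal N(0,1)$ independent of $X_{ij}^1(t)$, one has $X_{ij}^2(t)\stackrel{d}{=}X_{ij}^1(t)+\eta Z$.

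The decisive step, and the one I expect to be the real obstacle, is that $g$ is \emph{convex} in the dB power loss $x$: writing $u:=\frac{P_{ij}}{N_0}e^{Kx}>0$, a direct computation gives $g''(x)=\frac{B_{ij}}{\ln 2}\,\frac{K^2 u}{(1+u)^2}>0$. This is the counterintuitive heart of the result — although $C_{ij}$ is concave in the transmit power $P_{ij}$, it is convex in the logarithmic power loss $X_{ij}$, which is exactly why extra channel noise helps. Given convexity, I would condition on $X_{ij}^1(t)$ and apply Jensen's inequality to the zero-mean increment $\eta Z$, obtaining $\EE[g(X_{ij}^1(t)+\eta Z)\mid X_{ij}^1(t)]\ge g(X_{ij}^1(t))$; taking expectations yields $\EE[g(X_{ij}^2(t))]\ge\EE[g(X_{ij}^1(t))]$ (integrability is ensured by the at-most-linear growth of $g$ against Gaussian tails). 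Integrating over $[s,T]$ gives $\bar C_{ij}^1\le\bar C_{ij}^2$ for every link, and the reduction of the first paragraph then delivers $J_1^*\le J_2^*$. Once the sign of $g''$ is settled, the remainder is the standard convex-order consequence of a mean-preserving spread.
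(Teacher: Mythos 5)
Your proof is correct, and its overall skeleton is the one the paper uses: reduce the comparison of optima to an inclusion of feasible sets, which in turn reduces to the per-link inequality $\EE[C_{ij}^1(t)]\le\EE[C_{ij}^2(t)]$, obtained from the convexity of the Shannon capacity as a function of the dB power loss. Where you genuinely diverge is in how you establish the convex ordering of $X_{ij}^1(t)$ and $X_{ij}^2(t)$: the paper invokes Proposition 3.1 of Hirsch--Yor on comparing Brownian stochastic integrals in the convex order, whereas you exploit the fact that, the integrands $\delta_{ij}^k$ being deterministic, the marginals are Gaussian with identical means and ordered variances, so $X_{ij}^2(t)\stackrel{d}{=}X_{ij}^1(t)+\eta Z$ is a mean-preserving spread and conditional Jensen finishes the job. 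Your route is more elementary and self-contained (it needs only the one-dimensional marginals, which is all the capacity constraint uses after Tonelli), and you also make explicit the computation $g''(x)=\frac{B_{ij}}{\ln 2}\frac{K^2u}{(1+u)^2}>0$ that the paper merely asserts; the paper's citation buys generality that is not needed here (it would matter if the diffusion coefficients were adapted processes rather than deterministic functions, in which case the integrals need not be Gaussian). One small point to keep consistent with the paper: in your feasible-set-inclusion step you should, as the paper does in defining $\mathfrak{L}^k$, restrict to deterministic $(\Lambda,R)$ (justified since the optimal rates and routing variables are deterministic), because an adapted stochastic control for network $1$ lives on a different filtration than network $2$ and cannot literally be reused; with that caveat your argument is complete.
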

\begin{proof}
The solutions of the SDEs in Eq. (\ref{eq:sdepathlossi}) can be
written, $\forall (i,j)$, as
\begin{eqnarray}
X_{ij}^k(t)=X_{ij}^{det}(t)+\int_{t_0}^t
\delta_{ij}^k(s)e^{-\int_s^t
\beta_{ij}(r)dr}dW_{ij}^k(s),~k\in\{1,2\},
\label{eq:solutionpowerlossi}
\end{eqnarray}
\noindent where $X_{ij}^{det}(t)=e^{-\int_{t_0}^t
\beta_{ij}(s)ds}x_{ij0}+ \int_{t_0}^t \beta_{ij}(s)
\gamma_{ij}(s)e^{-\int_s^t \beta_{ij}(r)dr}ds$
 is the solution for a noiseless channel.
Note that
$\EE_{t_0,x_0}\big[X_{ij}^k(t)\big]=X_{ij}^{det}(t),~k\in\{1,2\}.$

%\label{eq:meanpower}
%\end{eqnarray}
 By
Proposition $3.1$ in \cite{hirsch14} we have that
$X_{ij}^1(t)\stackrel{(c)}{\leq} X_{ij}^2(t), ~\forall t \in
[t_0,T]$, % \label{eq:result1}
%\end{eqnarray}
where $\stackrel{(c)}{\leq}$ stands for partial ordering in the
convex order. That is, for every convex function $\psi$ we have
$\EE_{t_0,x_0}\left[ \psi\big(X_{ij}^1(t)\big) \right] \leq
\EE_{t_0,x_0}\left[ \psi\big(X_{ij}^2(t)\big) \right], ~\forall t
\in [t_0,T]$. %\label{eq:result2}
%\end{eqnarray}
One such convex function is Shannon's formula for the channel's
capacity $C_{ij}$, i.e. $x\rightarrow
B_{ij}\log_2\left(1+\frac{e^{Kx}P_{ij}}{N_{0}}\right)$. Hence,
\begin{eqnarray}\EE_{t_0,x_0}\left[ C_{ij}^1(t) \right] \leq
\EE_{t_0,x_0}\left[ C_{ij}^2(t) \right], ~\forall t \in [t_0,T].
\label{eq:result3}
\end{eqnarray}
In particular, let us denote by $\mathfrak{L}^k,~k\in\{1,2\}$, the
set of the deterministic $\Lambda \in \mathcal{U}_{\lambda}^F,~R \in
\mathcal{U}_{r}^{E\times (N-1)}$ that satisfy the constraints of
$P_2$ imposed on each network:
\begin{eqnarray}
\EE_{t_0,x_0}\left[\int_{t_0}^T \lambda_i^d(t)dt +\int_{t_0}^T
\sum_{j:i \in \mathcal{R}(j,d)} r_{ji}^d(t)dt\right]\leq
\EE_{t_0,x_0}\left[\int_{t_0}^T
\sum_{j\in \mathcal{R}(i,d)} r_{ij}^d(t)dt\right],~ \forall i,d, \\
\EE_{t_0,x_0}\left[\int_{t_0}^T \sum_{d} r_{ij}^d(t)dt\right]\leq
\EE_{t_0,x_0}\left[\int_{t_0}^T \zeta_{ij} C_{ij}^k(t)dt\right],
~\forall (i,j)\in \mathcal{E},~k=\{1,2\}.
%\sum_{i,d:\, l\in Path(i,d)}\!\!\!\!\!\!\!\!\!\!\lambda_i^d \leq
%\frac{\mu_l}{T-t_0} \EE_{t_0,x_0} \Big[ \int_{t_0}^{T} C_l^{k}(s) ds
%\Big]\quad\forall \,l,~k\in\{1,2\}.
\end{eqnarray}
 \noindent In view of Rel. (\ref{eq:result3}) we have that $\mathfrak{L}^1\subseteq \mathfrak{L}^2$ and
therefore, as asserted,
\begin{eqnarray}
J_1^*=\max_{\Lambda,~R  \in \mathfrak{L}^1}
\EE_{t_0,x_0}\left[\int_{t_0}^T \sum_{i,d} U_i^d(\lambda_i^d(t),t)
dt\right]\leq \max_{\Lambda,~R  \in \mathfrak{L}^2}
\EE_{t_0,x_0}\left[\int_{t_0}^T \sum_{i,d} U_i^d(\lambda_i^d(t),t)
dt\right]=J_2^*.
\end{eqnarray}%which completes the proof showing that the users'
%optimal utility increases with the increase of the diffusion
%coefficient of the power loss.
\end{proof}

\begin{remark}
Note that
 $C_{ij}\big(\delta_{ij}(.)\big)\ge
B_{ij}\EE^{A_0}\Big[\log_2\big(1+\frac{P_{ij}e^{KX_{ij}(t)}}{N_0}\big)\Big]
\ge \mathbf{P}(A_0)
B_{ij}\log_2\frac{P_{ij}}{N_0}+\frac{B_{ij}K}{\log
2}\EE^{A_0}\big[X_{ij}(t)\big]$, where $A_0=\big\{X_{ij}(t)\ge
\EE[X_{ij}(t)]\big\}$. Since $X_{ij}(t)$ are gaussian,
$\mathbf{P}(A_0)=1/2$ by symmetry, and $\EE^{A_0}[X_{ij}(t)]=
\frac{1}{2}\EE[X_{ij}(t)]+\sqrt{\frac{V(X_{ij}(t))}{2\pi}}$ where
$V(X_{ij}(t))$ is the variance of $X_{ij}(t)$ given by
$\int_{s}^{t}\delta_{ij}^2(r)\exp(-2\int_{r}^{t}\beta_{ij}(q)dq)dr$.
Therefore, the link capacity may take arbitrarily large values if
$\int_s^T \delta_{ij}^2(t)\,dt$ is sufficiently large.
\end{remark}

\section{Numerical Results} \label{sec:simulation} In this section,
we present and discuss indicative numerical results evaluating the
proposed schemes focusing on the case of orthogonal access to the
medium and LTF. After describing the evaluation setting and some
general observations, we illustrate numerically Theorem
\ref{thm:volatilityincrease} and the behavior of the proposed
framework in case of congestion control and routing (i.e.
optimization in transport and network layers). The latter is then
compared with the behavior of the joint routing, scheduling,
congestion and power control scheme (i.e. cross-layer optimization).
In addition, we examine the behavior of the proposed framework in
the case of time varying utilities and the possibility of applying
the solution procedure of the dual problem $D_2$, during the network
operation (online deployment).

We consider a wireless multihop network of $N=16$ nodes, forming a
$4\times 4$ grid topology. For ease of presentation, we consider
that the parameters of the SDE (\ref{eq:sdepathloss}),
$\gamma_{ij}(t),\beta_{ij}(t),\delta_{ij}(t)$, as well as the
initial states $X_{ij}(s)=x$ are identical for every link $(i,j)$. %This will lead to an equal value of $\EE_{t,x} \Big[ \int_{s}^{T}
%C_{ij}(s) ds\Big]$ for all ${ij}$, that we need only compute once in
%the absence of power control and scheduling.
 Here, we consider
$M=200$ paths, $N_{0}=0.1W$, $s=0$. Furthermore, $\forall\,
(i,j),\tau_b$, $B_{ij}=10^6 Hz$, $\gamma_{ij}(\tau_ b)=\gamma
\bigg(1+0.15 e^{(-2 \frac{\tau_b}{n})} \sin(10 \pi
\frac{\tau_b}{n})\bigg)$ \cite{olama10}, $X_{ij}(0)=\gamma$,
$n=500$, and $\beta_{ij}(\tau_b)=100$,
$\gamma=70 dB$ unless differently mentioned. Moreover,
$\delta_{ij}(\tau_b)=\delta,\forall\, (i,j),\tau_b,$ where $\delta$
will be tuned in each numerical experiment. Note that the value
$n=500$ determines the sampling rate for computing the Riemann sums
that approximate the integrals e.g. in Eqs. (\ref{eq:lagrange2b}),
(\ref{eq:lagrange2c}) and it is chosen so that the Riemann sum is
close to the corresponding integral value, while simultaneously
being small enough for trading-off the cost of sampling. Low
sampling rate impacts performance since the Riemann sum does not
converge to the actual value of the corresponding integral. On the
contrary high sampling rate may induce extra cost without offering
significant improvement in the Riemann
sum's accuracy in approximating the corresponding integral. %Note that the time dependence
The periodic behavior of the LTF wireless channel parameters may be
due to an absorbing obstacle intervening periodically between the
transmitter and the receiver (as in an example of
\cite{goldsmith05}).
%of $\gamma_l,~\forall l$ is obtained via $\tau_j$ due to the
%discretization described in Section \ref{sec:Problform}.
In the absence of scheduling optimization, for each link $(i,j)$ the
value $\zeta_{ij}$ is computed based on scheduling all the maximal
independent sets of the network topology for equal percentage of
time. Regarding the traffic model, each node chooses a random
destination among its non-physically connected nodes, and becomes
the source for this destination. Initially, we consider logarithmic
utilities, i.e. $U_i^d(\lambda_i^d)=\log (\lambda_i^d)$, a common
choice in the literature to model elastic traffic
\cite{georgiadis06}.

Each numerical experiment for the determination of the optimal
control variables and the optimal Lagrange multipliers runs until
convergence is ensured and specifically until the sum of the changes
between consecutive values of the Lagrange multipliers over the
preceding eight iterations is less than $0.001$. The learning rate
is chosen as $\kappa(\eta)=\frac{A'}{\eta},~\forall\,\eta$
\cite{bertsekas09}, where $A'=0.1$ so that convergence is allowed in
a reasonable time duration with respect to the chosen convergence
criterion. Specifically, by decreasing $A'$ by one or more orders of
magnitude our criterion of convergence is satisfied too soon,
impeding the subgradient algorithm approach to the global optimal
values. On the other hand, increasing $A'$ or using constant values
of $\kappa(\eta),~\forall\,\eta$  do not lead to convergence within
a reasonable time interval.

In order to obtain an intuition regarding the wireless channel's
stochastic behavior, Figs. \ref{fig:capacitypaths},
\ref{fig:capacitypaths1} show typical sample paths of the solution
to the SDE (\ref{eq:sdepathloss}) for $\delta=25$ and $\delta=50$,
respectively. As expected, we observe larger deviations of the power
loss from $\gamma_{ij}(\cdot)$ for the higher choice of $\delta$.
Consequently, the capacity may achieve higher values due to random
fluctuations in this case. Note that although the curve of
$\gamma_{ij}(\cdot)$ tends to weaken to a line as time increases, it
fluctuates considerably for the duration of the network's operation
(transient state). Fig. \ref{fig:capacitypaths3} shows typical
sample paths of the solution to the SDE (\ref{eq:sdepathloss}) for
time varying speed of adjustment $\beta_{ij}(\tau_b)$. When
$\beta_{ij}(\tau_b)$ attains low values (i.e. $\tau_b\leq166$) the
power loss diverges more from its attraction curve
$\gamma_{ij}(\tau_b)$, while the attraction to $\gamma_{ij}(\tau_b)$
becomes faster when $\beta_{ij}(\tau_b)=500$ (i.e. $\tau_b>333$).
\begin{figure}[t]
    \centering
    \subfigure[$\delta=25$.]
      {\includegraphics[width=2in]{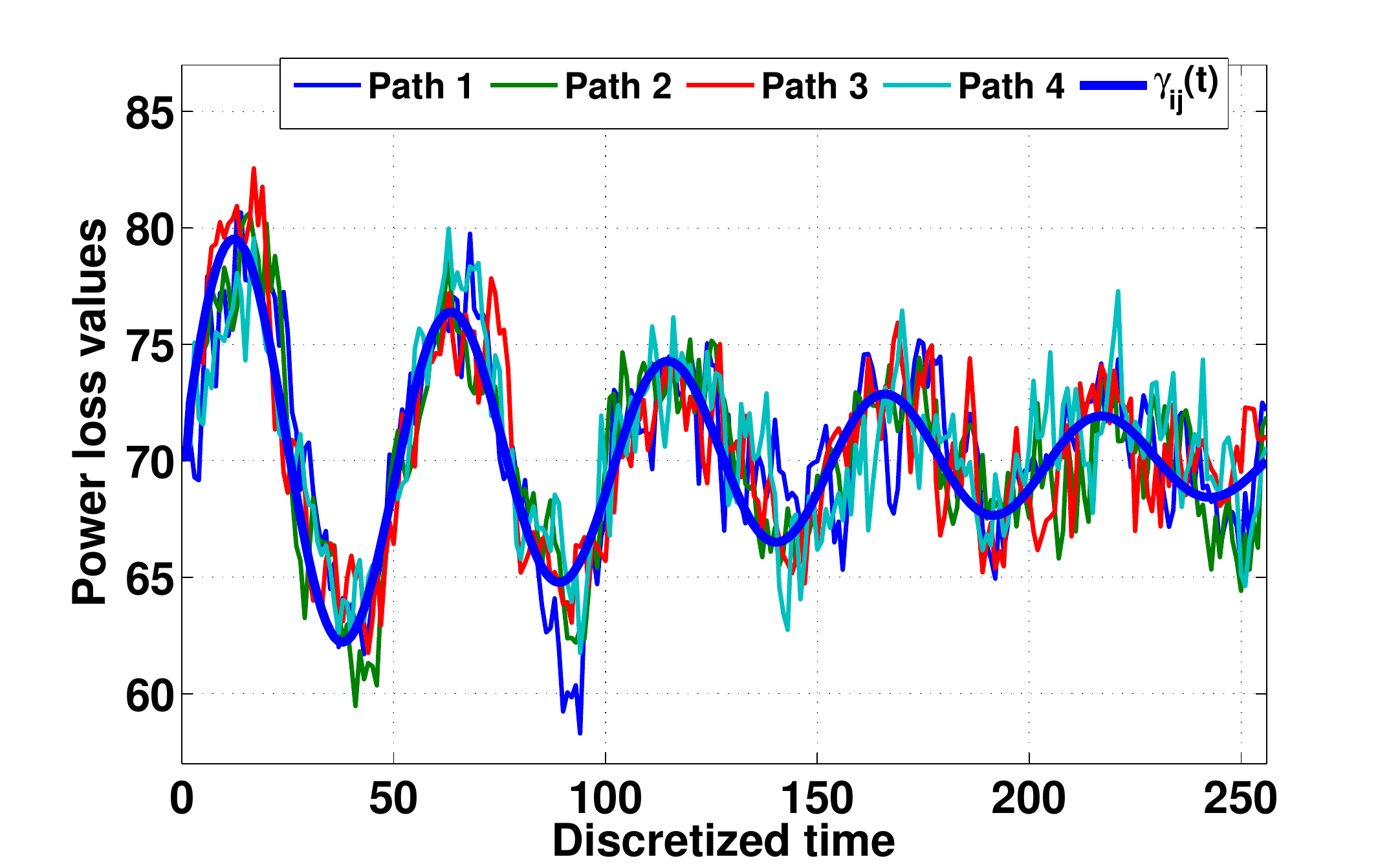}
    \label{fig:capacitypaths}}
     \subfigure[$\delta=50$.]
      {\includegraphics[width=2in]{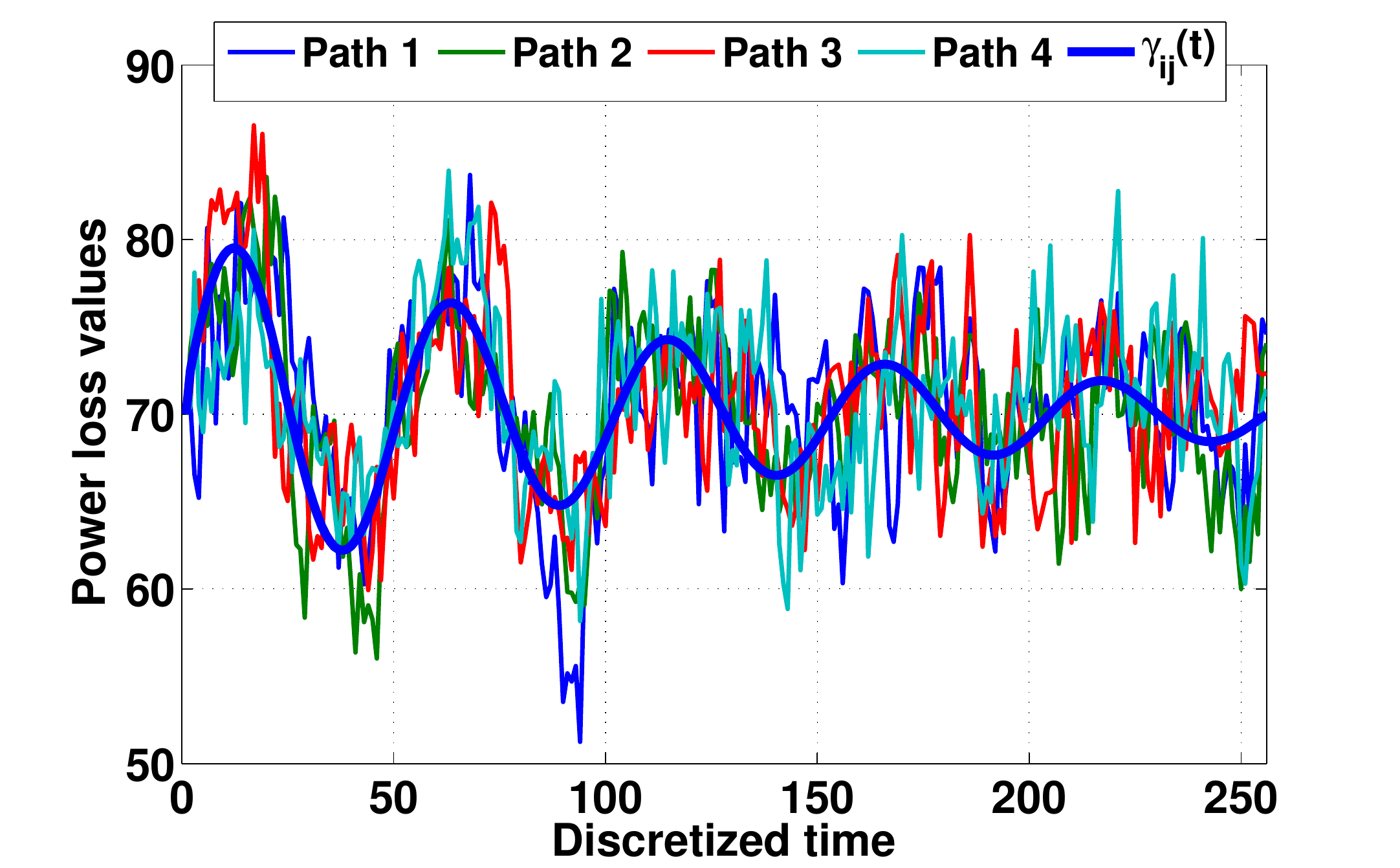}
    \label{fig:capacitypaths1}}
    \subfigure[$\delta=20$, $\beta=10$ for $\tau_b\leq166$, $\beta=100$ for $166<\tau_b<333$ else $\beta=500$.]
      {\includegraphics[width=2in]{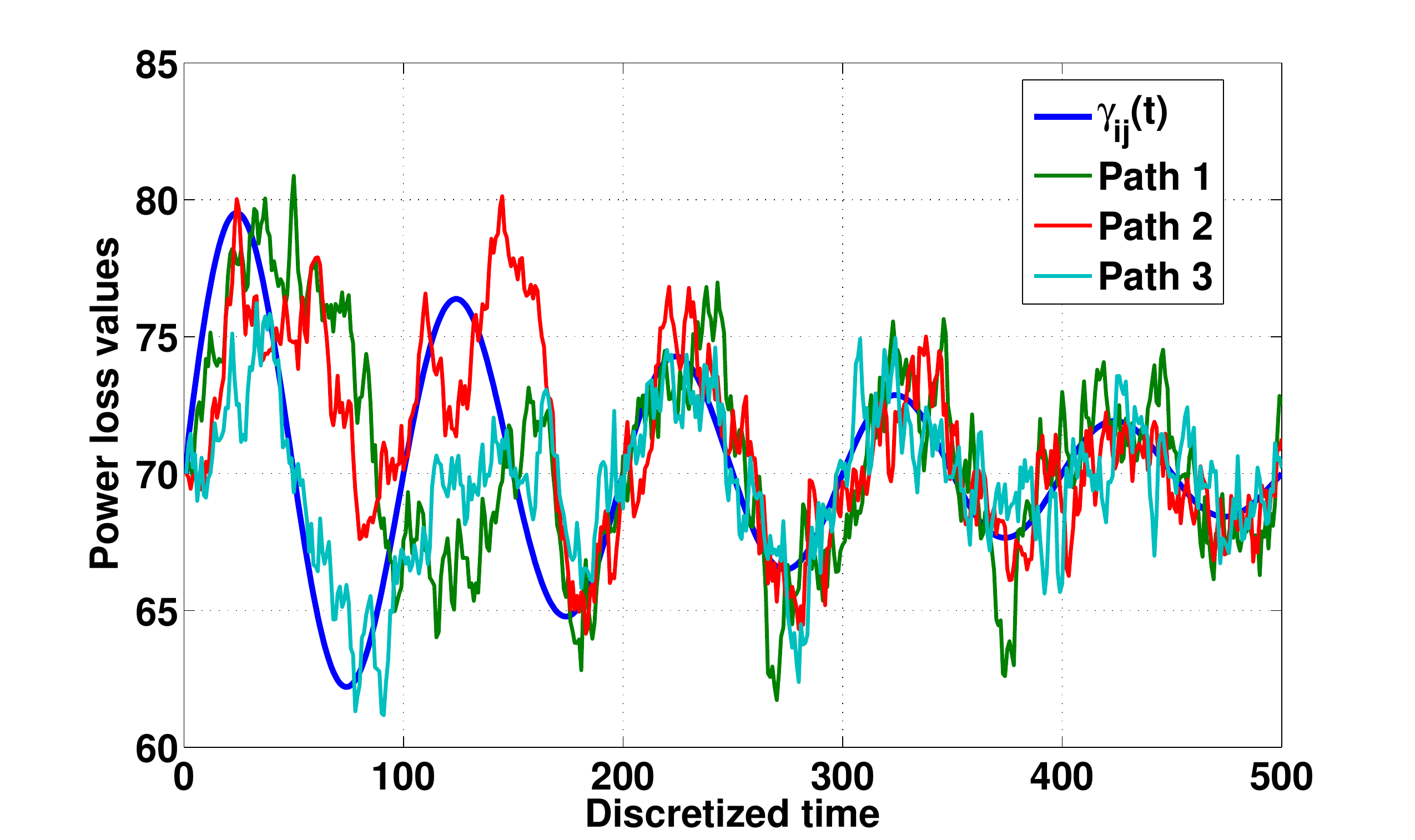}
    \label{fig:capacitypaths3}}
    \caption{Wireless channel's LTF stochastic behavior. Paths for the power loss (dB) modeled via SDE.}
\end{figure}
\subsection{Congestion Control \& Routing}
In the sequel, we examine the case of applying only congestion
control and routing, assuming the transmitter of each link $(i,j)$
has a constant power of $P_{ij}(t)=2 W$, $\forall~t\in [0,T]$.
\begin{figure}[t]
    \centering
    \subfigure[Optimal source rates (bits/sec) vs. $\delta$.]
      {\includegraphics[width=2in]{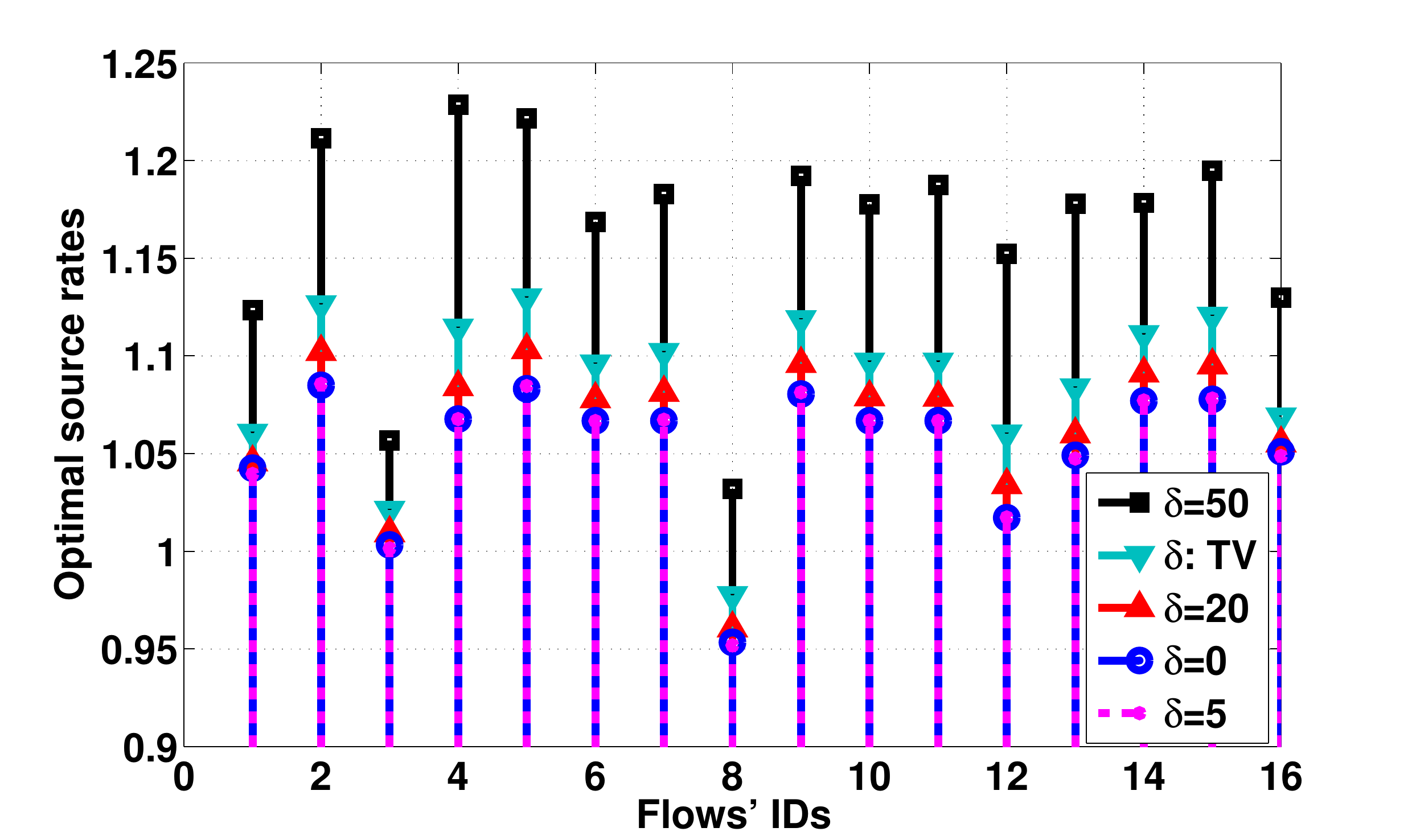}
    \label{fig:routingcongestiondelta}}
    \subfigure[Convergence of the Lagrange multipliers $\{\mu_i^d\}_{\forall i,d}$.]
      {\includegraphics[width=2in]{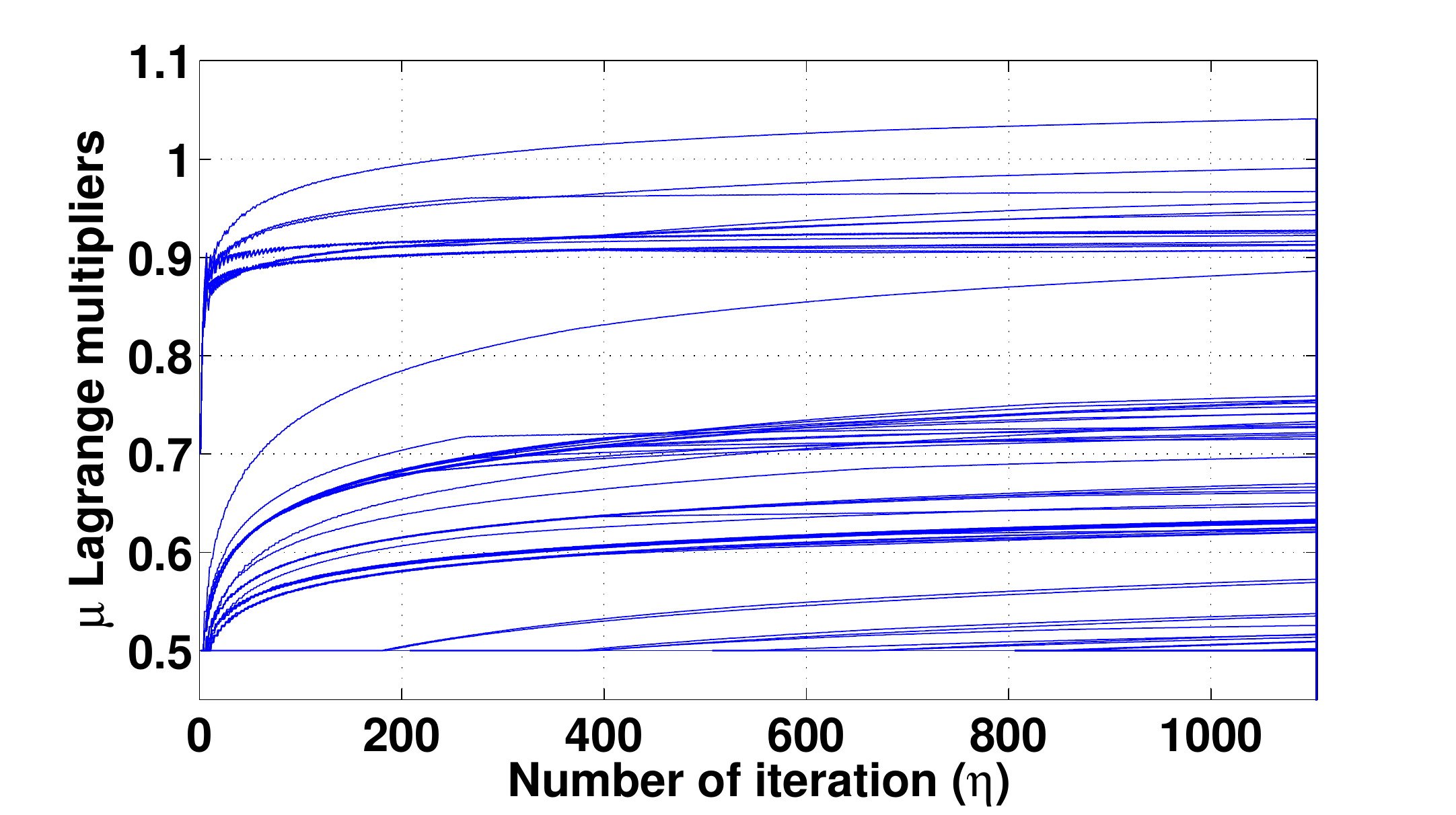}
    \label{fig:convM}}
 \subfigure[Convergence of the Lagrange multipliers $\{l_{ij}\}_{\forall (i,j)}$.]
      {\includegraphics[width=2in]{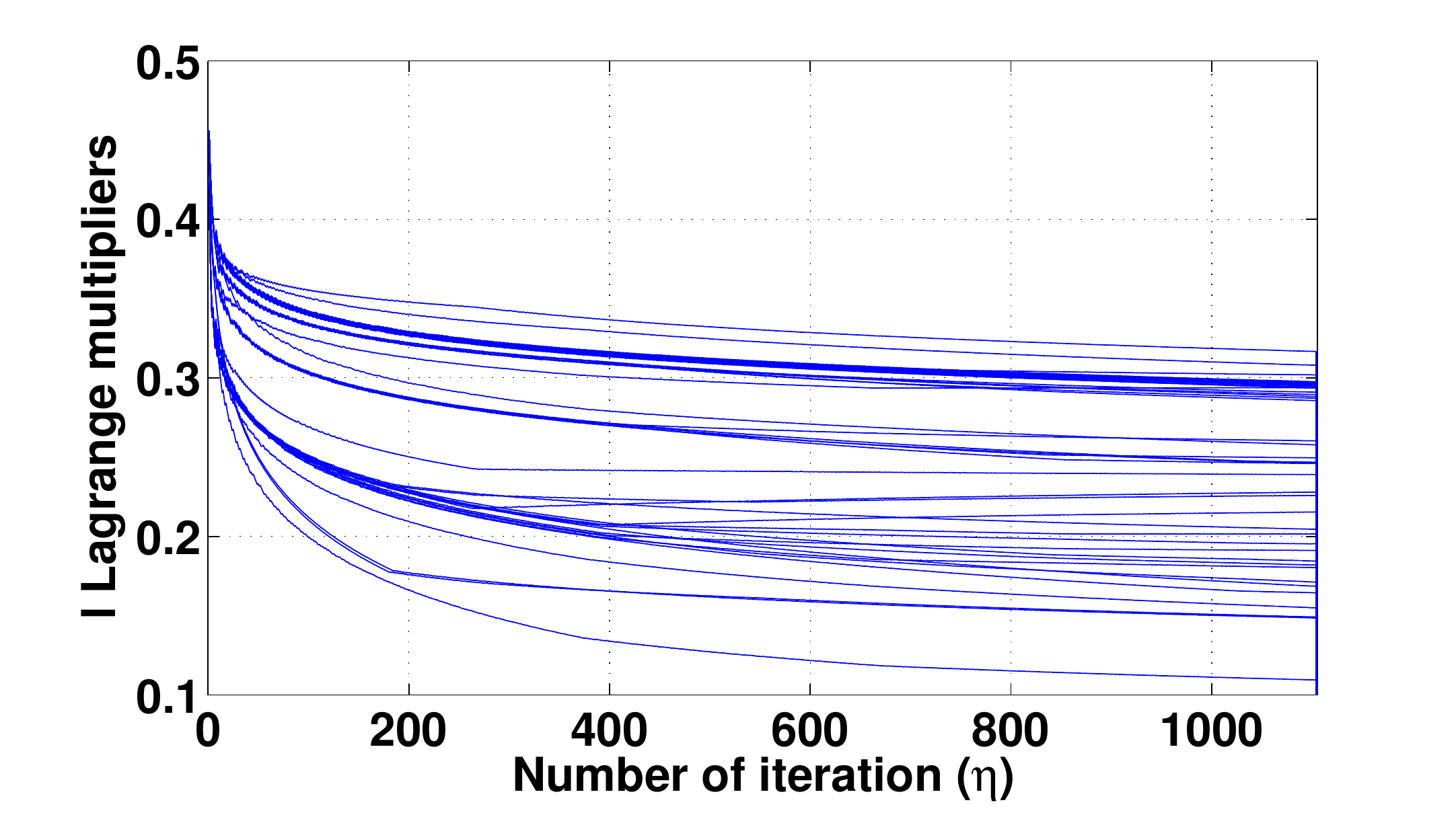}
    \label{fig:convL}}
    \caption{Congestion control \& routing: Optimal source rates and convergence of the Lagrange multipliers.}
\end{figure}
Fig. \ref{fig:routingcongestiondelta} shows the optimal source rates
(i.e. after convergence since they change continuously with respect
to the optimal Lagrange multipliers) for different choices of the
diffusion coefficient $\delta$. It can be observed that as $\delta$
increases, random fluctuations to higher capacity values are
exploited to offer increased optimal source rates, hence verifying
numerically the statement of Theorem \ref{thm:volatilityincrease}.
It is important to note here that the noise level $\delta$ does not
affect the mean power loss, as indicated in the proof of Theorem
\ref{thm:volatilityincrease}. In other words, if the mean power loss
is used to determine capacity, higher capacity values due to random
fluctuations cannot be tracked and exploited for increasing the
source rates. Time-varying $\delta$ (TV) is also applied, and
specifically $\delta=15 sin(10 \pi d/n)+35$ i.e. taking values
between $\delta=20$ and $\delta=50$, thus leading to optimal source
rates in between the ones corresponding to these two values of
$\delta$. For benchmarking purposes, in Fig.
\ref{fig:routingcongestiondelta} we have added the case of
$\delta=0$, which corresponds to time-varying but deterministic
channels. We observe that the optimal source rates achieved under
deterministic wireless channels are the lowest (approximately the
same as in the case of a low value of noise, i.e. the case that
$\delta=5$), indicating the improvement in system's performance when
accounting for randomness.

Figs. \ref{fig:convM}, \ref{fig:convL} depict the behavior of the
proposed scheme considering convergence to the optimal Lagrange
multipliers. We also study the impact of $T$ on the time to
convergence and the achieved source rates. The value of $n$ is
adapted for each $T$ as it is shown in Fig. \ref{fig:Limpact}. We
observe that as the duration of the network's operation, $T$,
increases, the time to convergence also increases (Fig.
\ref{fig:Limpact}), while the achieved arrival source rates decrease
for all flows (Fig. \ref{fig:Timpact}). Finally, the behavior of the
proposed algorithm in case of time varying $\beta_{ij}(t)$ with
respect to the optimal source rates is shown in Fig.
\ref{fig:betaimpact}. We observe that when $\beta_{ij}(t)$ is high
at the beginning (close to time $s=0$) the optimal rates are higher
than when $\beta_{ij}(t)$ is initially low and increases later in
time.
\begin{figure}[t]
    \centering
\subfigure[Impact of $T$ on the time to convergence (iterations).]
      {\includegraphics[width=2in]{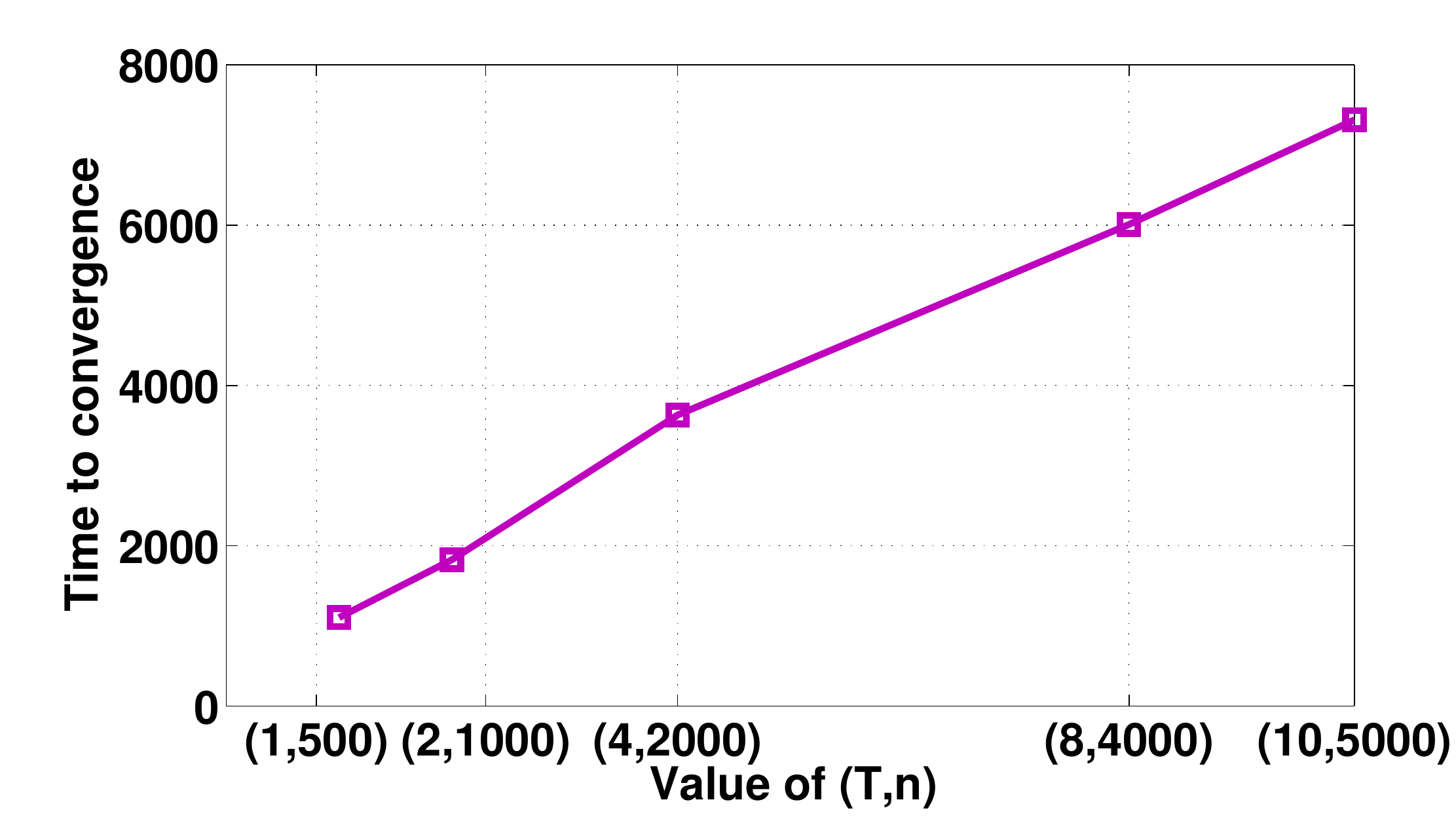}
    \label{fig:Limpact}}
    \subfigure[Impact of $T$ on the optimal flows' rates (bits/sec).]
      {\includegraphics[width=2in]{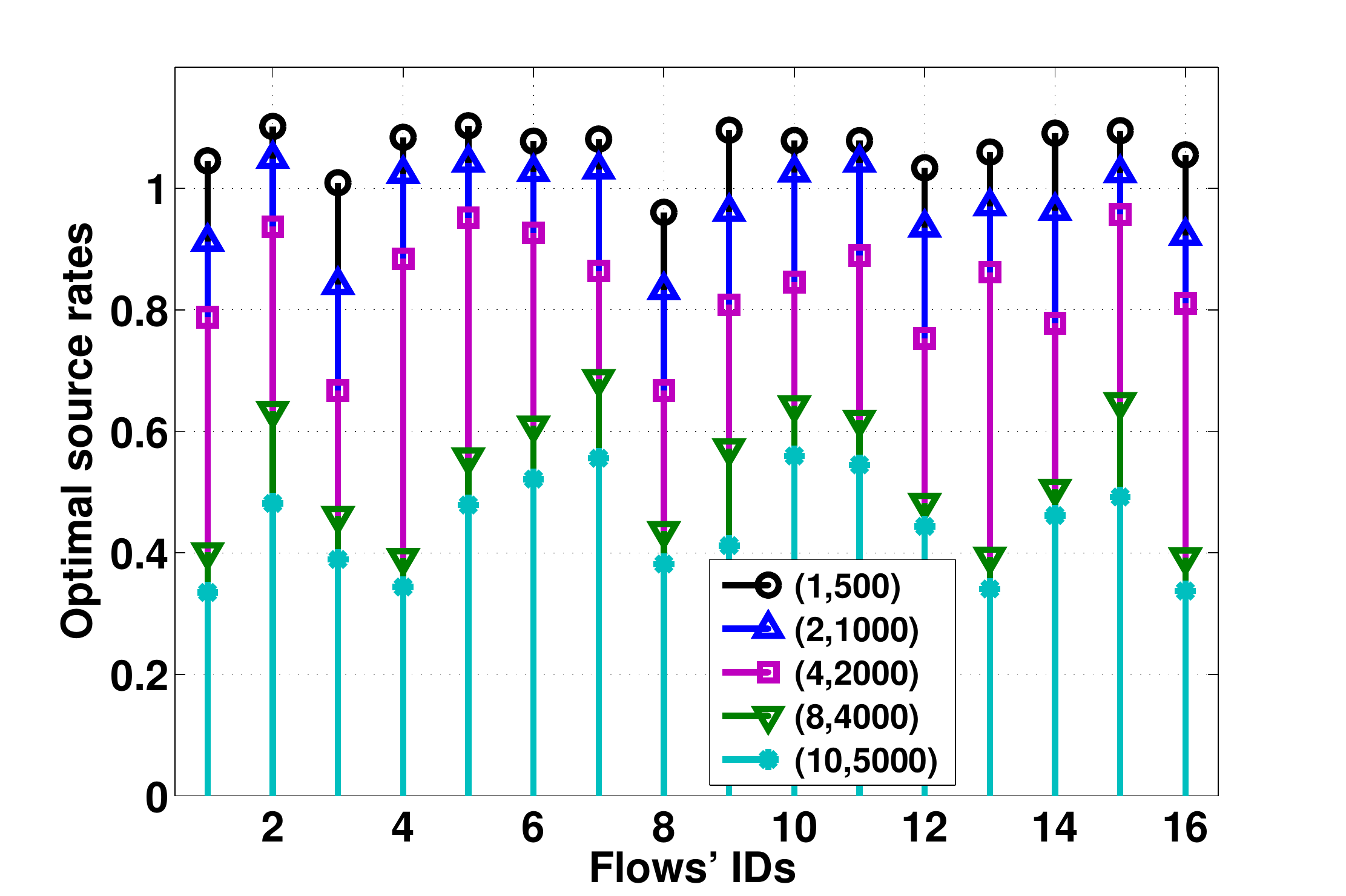}
    \label{fig:Timpact}}
    \subfigure[Impact of time varying $\beta_{ij}(t)$ of the SDE (\ref{eq:sdepathloss}) on the optimal flows' rates (bits/sec).]
      {\includegraphics[width=2.2in]{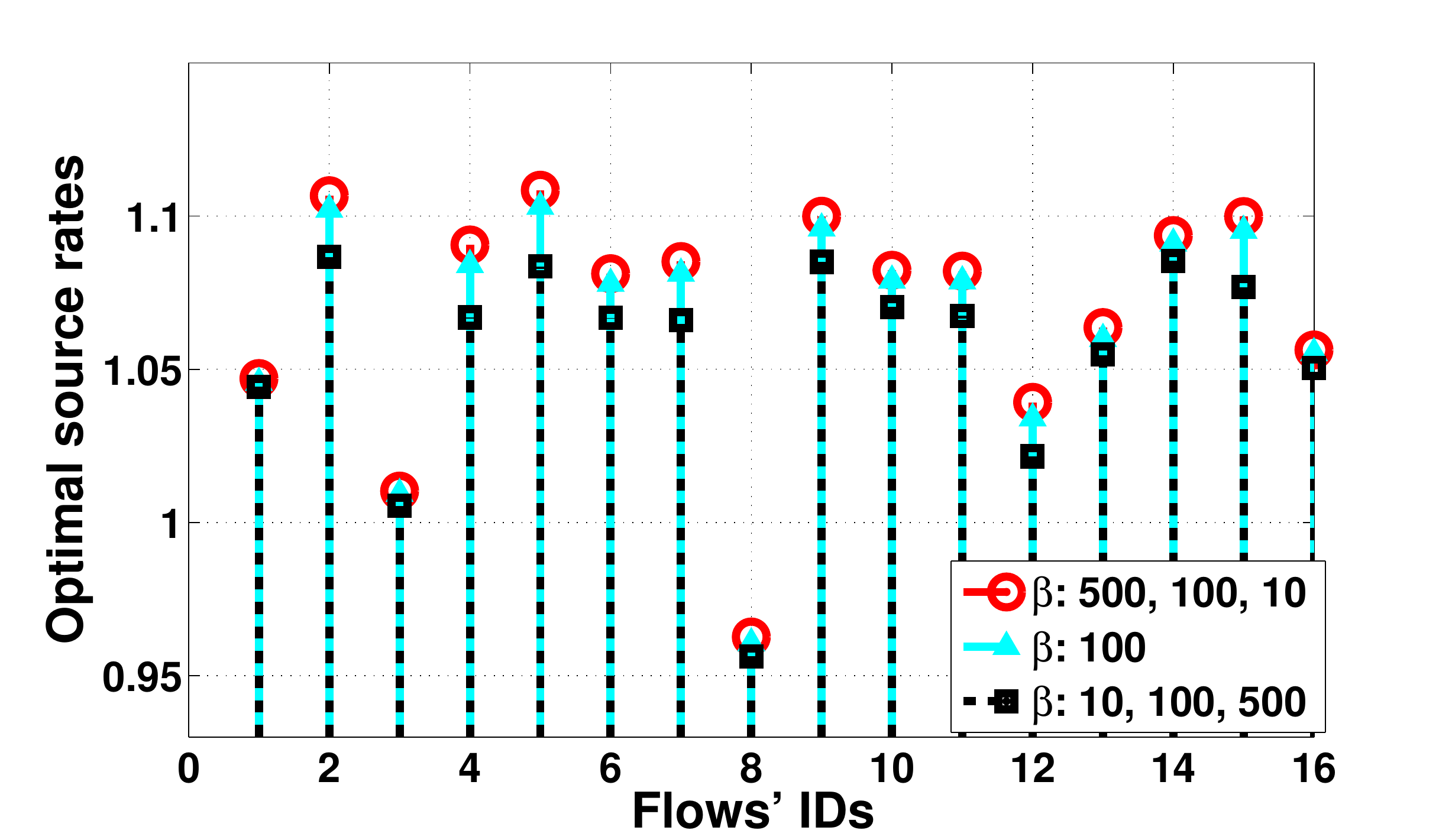}
    \label{fig:betaimpact}}
    \caption{Congestion control \& routing: Study of the impact of $T$ and time-varying $\beta$ on convergence. In subfigure (c), black corresponds to $\beta=10$ for $\tau_b\leq166$, $\beta=100$ for $166<\tau_b<333$ else $\beta=500$, red corresponds to $\beta=500$ for $\tau_b\leq166$, $\beta=100$ for $166<\tau_b<333$ else $\beta=10$, and cyan corresponds to $\beta=100$.}
\end{figure}

\subsection{Joint Scheduling, Routing, Congestion \&
Power Control Scheme} In order to evaluate the joint scheduling,
routing, congestion and power control scheme, we assume that each
link can vary its transmission power between $1W$ and
$3W=P_{i,\max}, ~\forall i$. The network topology and traffic along
with the rest of the parameters remain the same as in the previous
experiments. Fig. \ref{fig:capacitypower} depicts the optimal
transmission power at each repetition of channel state's sampling
derived from Eq. (\ref{eq:p3solution})
%as a function of the stochastic process modeling the power loss,
and the corresponding path of the link capacity. It is observed that
the optimal power increases when power loss decreases (thus capacity
increases) and attains low values for high values of power loss, as
expected from the analysis of Section \ref{sec:orthogonal}.
Therefore, the transmission power increases only when there is an
opportunity for an important capacity improvement due to random dips
of power loss. On the contrary, transmission power is not wasted
when the stochastic power loss does not support capacity increase.
Fig. \ref{fig:powerroutingcongestiondelta} shows the optimal source
rates (i.e. after convergence) for different choices of the
diffusion coefficient $\delta$. As in the previous case (Fig.
\ref{fig:routingcongestiondelta}), it can be observed that as
$\delta$ increases the optimal source rates increase for all flows.
This is beyond the scope of Theorem \ref{thm:volatilityincrease},
that does not account for power control and scheduling in the
optimization problem. Fig. \ref{fig:powerroutingcongestiondelta}
also includes the optimal source rates when $\delta$ is time-varying
(TV) and when channels are deterministic ($\delta=0$), leading to
similar conclusions as in the case of congestion control and routing
(Fig. \ref{fig:routingcongestiondelta}). Figs. \ref{fig:convMP},
\ref{fig:convLP} depict the convergence to the optimal Lagrange
multipliers.

 Fig.
\ref{fig:ratescomparison} shows that the joint scheduling, routing,
congestion and power control scheme improves the optimal source
rates for all flows, compared to the congestion control and routing
scheme. As also stated in \cite{goldsmith05}, applying power control
at the transmitter is like having channel state information at both
the transmitter and the receiver which improves the network capacity
compared with the case of constant power which is equivalent to
having channel state information at the receiver's side only.
Furthermore, we study the expected per link transmission power over
the entire time interval $[0,T]$ for the numerical experiments of
Fig. \ref{fig:powercongroutsch}. It can be easily computed that it
is equal to $1.0399W$ for $\delta=50$, $1.2658W$ for $\delta=20$,
$1.5634W$ for $\delta=5$, $1.6249W$ for $\delta=0$. On the one hand
these values are much smaller than the constant power of $2W$ used
to evaluate the routing and congestion control scheme (Fig.
\ref{fig:routingcongestiondelta}), thus achieving a more ``green"
network operation in addition to throughput improvement (Figs.
\ref{fig:powerroutingcongestiondelta}, \ref{fig:ratescomparison})
leading to energy efficiency. On the other hand, we observe that as
$\delta$ increases, the expected per link transmission power
decreases, indicating the importance of taking randomness into
account in operating wireless channels with energy efficiency.

\begin{figure}[t]
    \centering
    \subfigure[Capacity (bits/sec) \& optimal transmission power paths (W) ($\delta=20$).]
      {\includegraphics[width=2in]{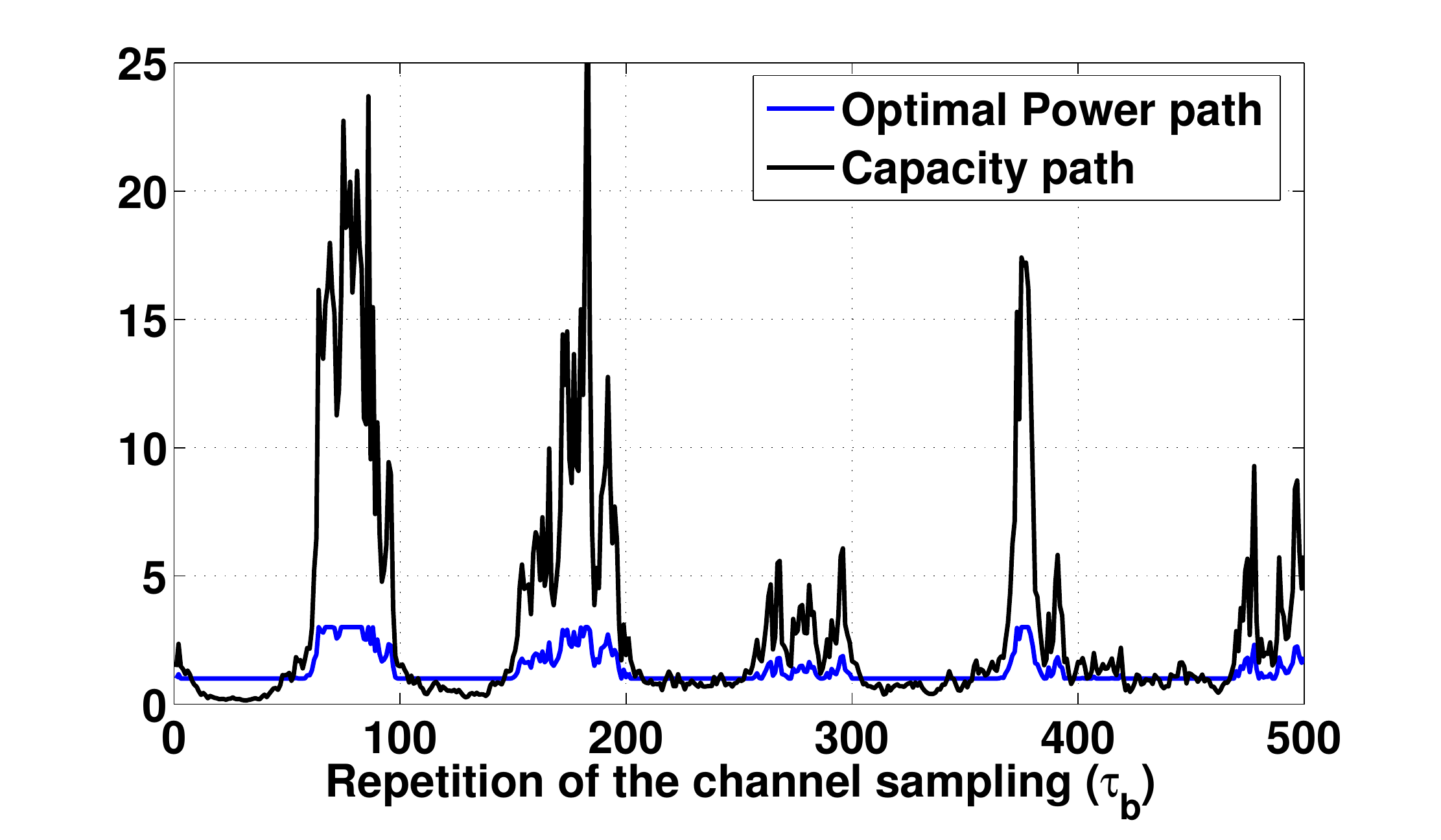}
    \label{fig:capacitypower}}
 \subfigure[Optimal source rates (bits/sec) vs. $\delta$.]
      {\includegraphics[width=2in]{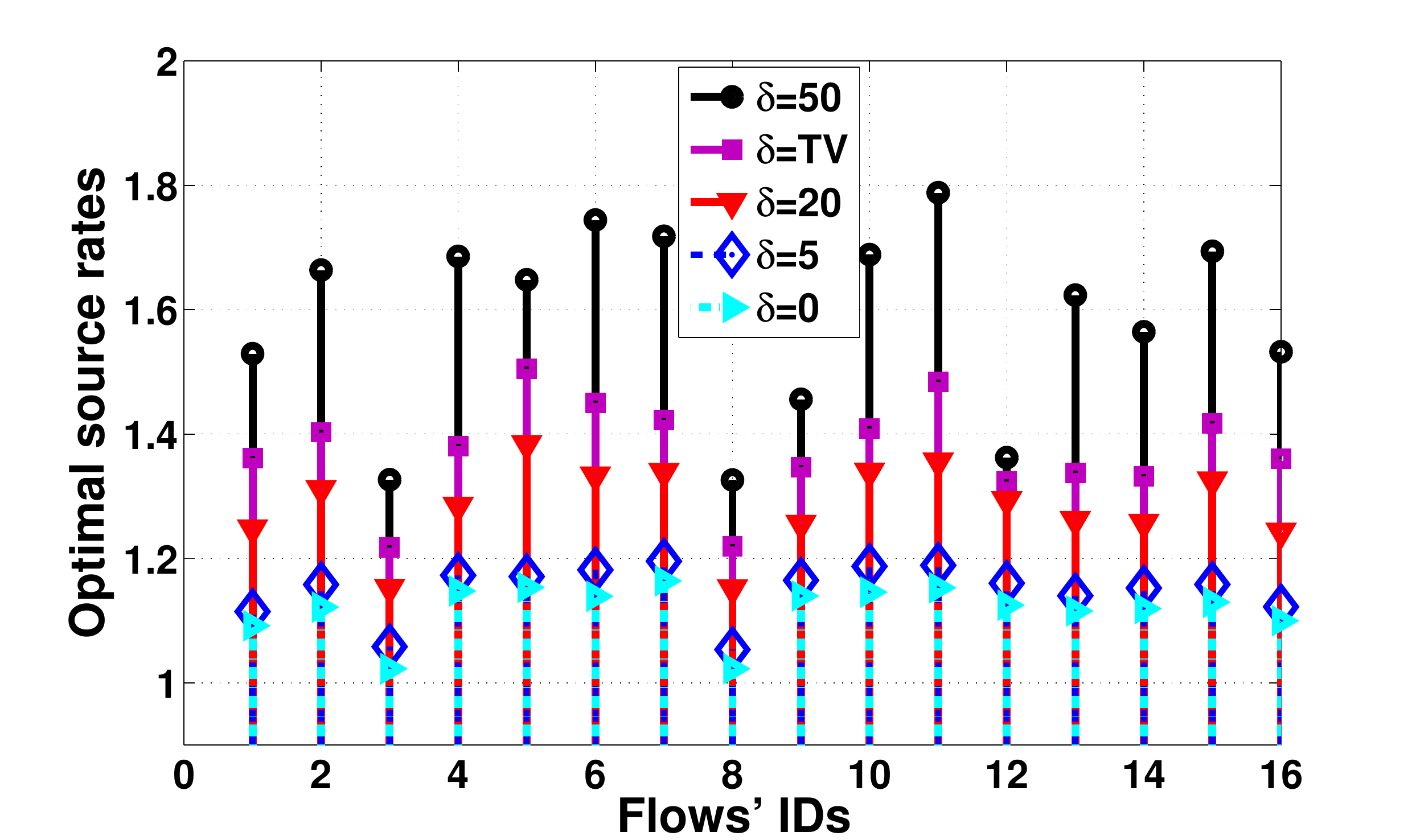}
    \label{fig:powerroutingcongestiondelta}}
     \subfigure[Comparison of optimal source rates (bits/sec) with and without scheduling and power control ($\delta=20$).]
      {\includegraphics[width=2in]{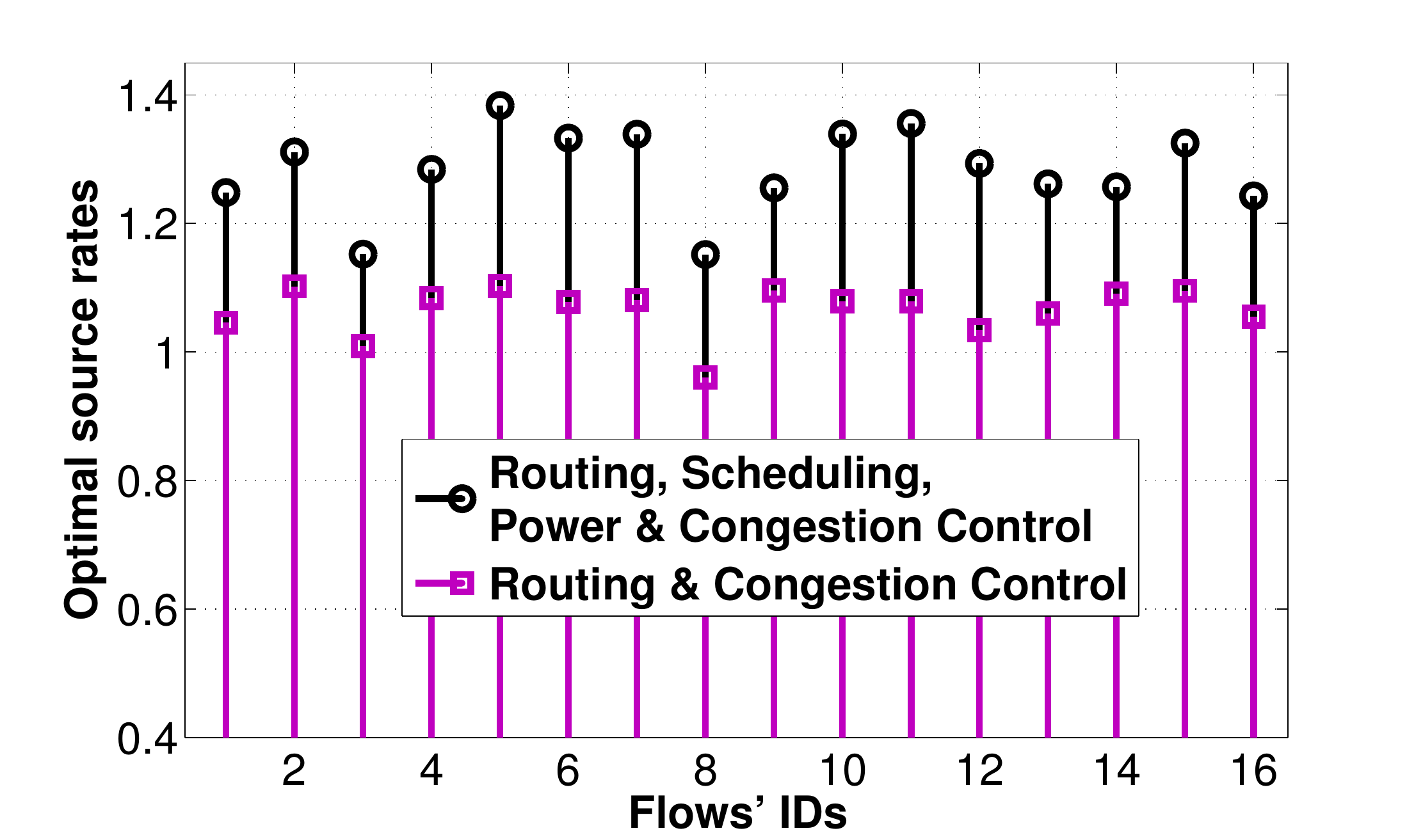}
    \label{fig:ratescomparison}}
   % \caption{Routing, Scheduling, Congestion \& Power control.}
    \centering
    \subfigure[Convergence of the Lagrange multipliers $\{\mu_i^d\}_{\forall i,d}$.]
      {\includegraphics[width=2in]{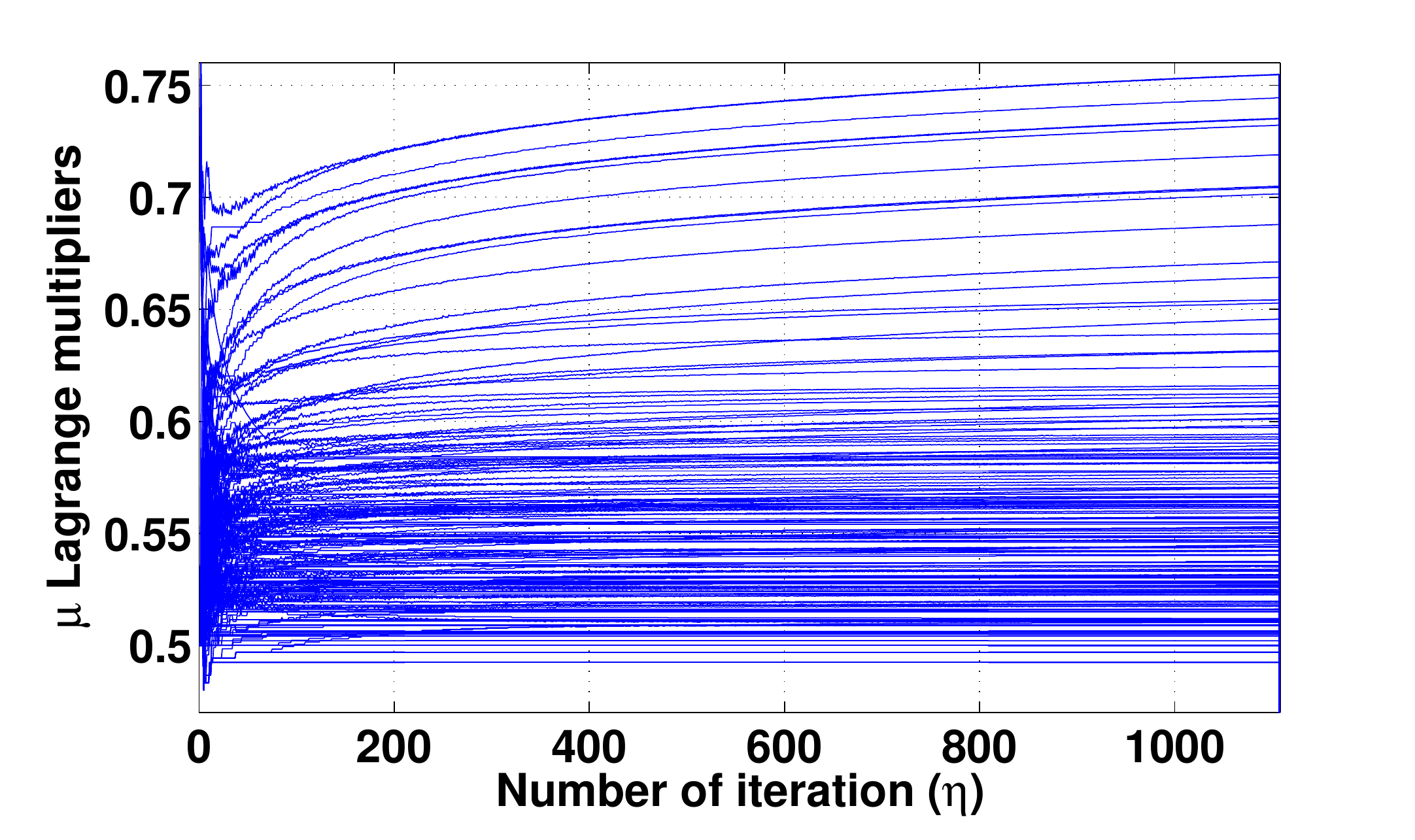}
    \label{fig:convMP}}
 \subfigure[Convergence of the Lagrange multipliers $\{l_{ij}\}_{\forall (i,j)}$.]
      {\includegraphics[width=2in]{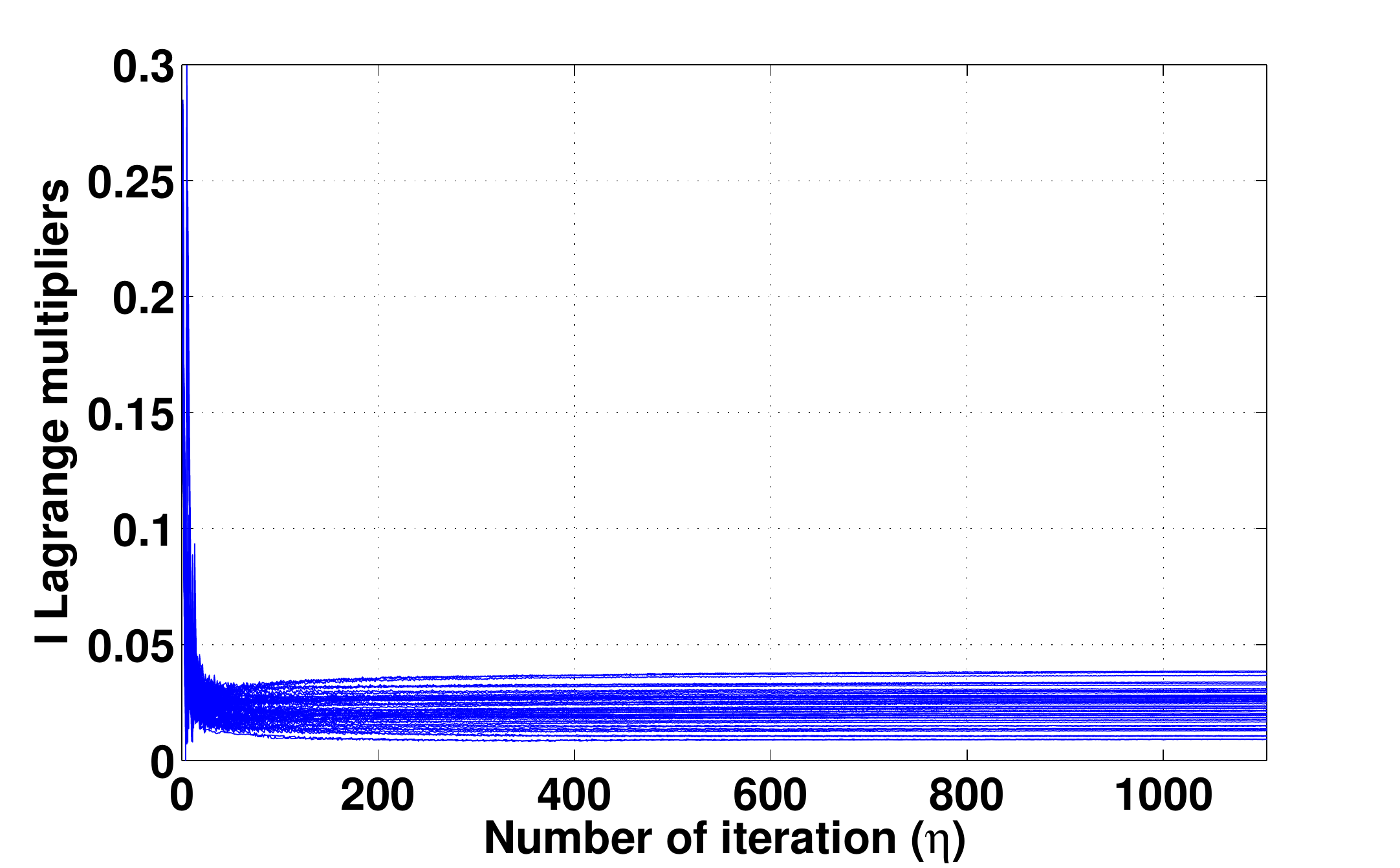}
    \label{fig:convLP}}
        \caption{Routing, Scheduling, Congestion \& Power control.}\label{fig:powercongroutsch}
\end{figure}

\subsection{Time Varying Utilities \& Online Deployment}
Finally, we study the case of time-varying utilities when applying
routing and congestion control. The utilities take the form
$U_i^d(\lambda_i^d)=\frac{\log
(\lambda_i^d)}{t},~t\in[s,T],~s>0~\forall i,d:i \in S_r(d)$, i.e.,
they decrease with time modeling the decreasing willingness of users
to produce high data amounts when approaching the end of the
network's operation. The optimal function to which the source rates
converge over $[s,T]$ is depicted in Fig.
\ref{fig:RoutingCongestionTimeVarying}. We observe that as time
increases, it dominates over the Lagrange multiplier for the
determination of the source rates (Eq. (\ref{eq:lamda})).% This
%observation is also reinforced by the fact that, according to
%numerical results introducing power control does not increase the
%optimal source rates, i.e. leads to the same optimal source rate
%function with the case of congestion control and routing (Fig.
%\ref{fig:RoutingCongestionTimeVarying}).

At this point we will make another interesting observation regarding
the online application of the proposed approach (Section
\ref{sec:orthogonal}) during the network's operation which is
initially discussed in Section \ref{sec:nonortho}. In order to
obtain an online algorithm for the network control, i.e. during the
network operation, we may consider that the network decisions are
taken at $\tau_b$ times when the channel is sampled,
while also considering that $\tau_b\equiv \eta$. Then, we should
consider the optimal control values in the interval $(\tau_b,T]$ as
``predicted" and the ones in the interval $[s,\tau_b)$ as
``corrections". However, we should study under what conditions
convergence is achieved (in practice) early enough (for small
$\tau_b$) so that optimality with respect to the achieved value of
$P_2$ is not affected. In Figs. \ref{fig:ntimeconv1},
\ref{fig:ntimeconv2}, it is shown that when increasing $n$, the time
for convergence (according to the imposed criterion) of the proposed
scheme is not significantly affected for time invariant utilities
while it is affected in a concave manner for time varying utilities.
As a result, if $n$ is large enough, the decisions taken at times
$\tau_b$ will converge fast enough compared to the whole duration
$T$. Therefore, the online application of the proposed approach
during the network operation will be suboptimal only at the
beginning barely affecting the optimal objective value of
$P_2$. %Note that the optimal function of source rate for all flows
%derives the same for all values of $n$.

\begin{figure}[t]
    \centering
    \subfigure[Optimal functions (bits/sec) to which the source rates converge ($\delta=20$) for time varying utilities.]
      {\includegraphics[width=2in]{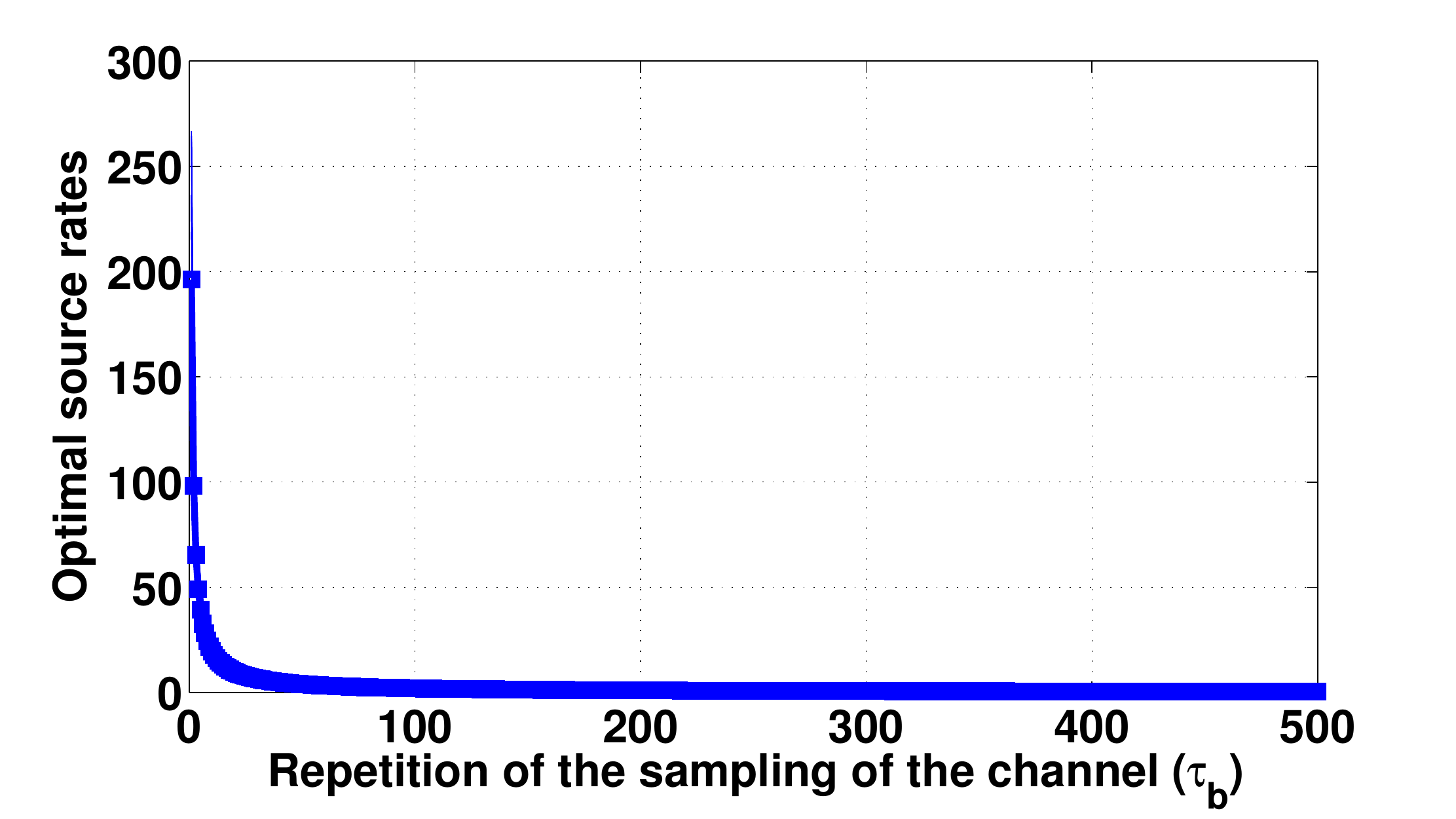}
    \label{fig:RoutingCongestionTimeVarying}}
 \subfigure[Impact of $n$ on the time to convergence (iterations) for time varying utilities.]
      {\includegraphics[width=2in]{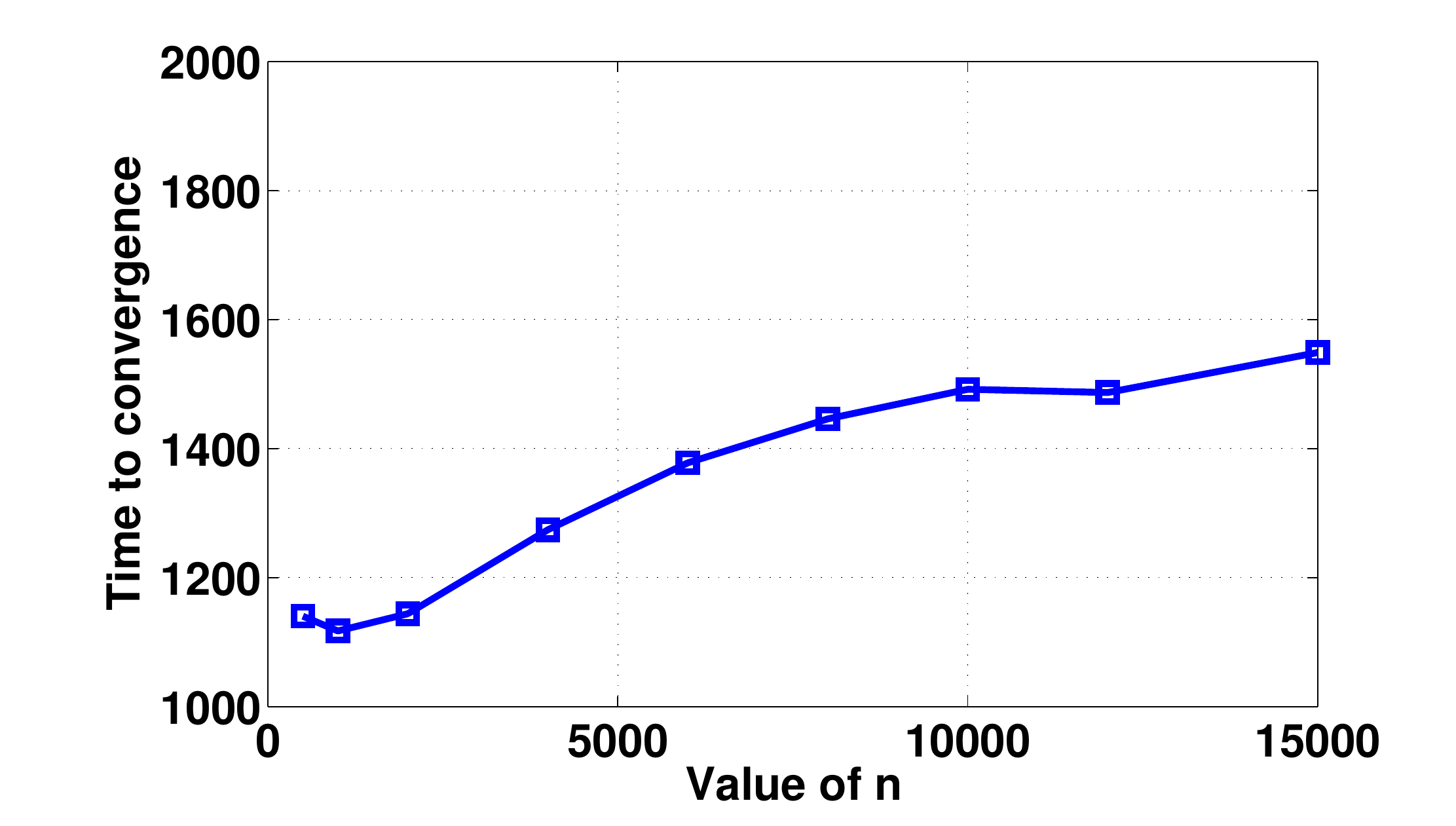}
    \label{fig:ntimeconv1}}
        \subfigure[Impact of $n$ on the time to convergence (iterations) for time invariant utilities.]
      {\includegraphics[width=2in]{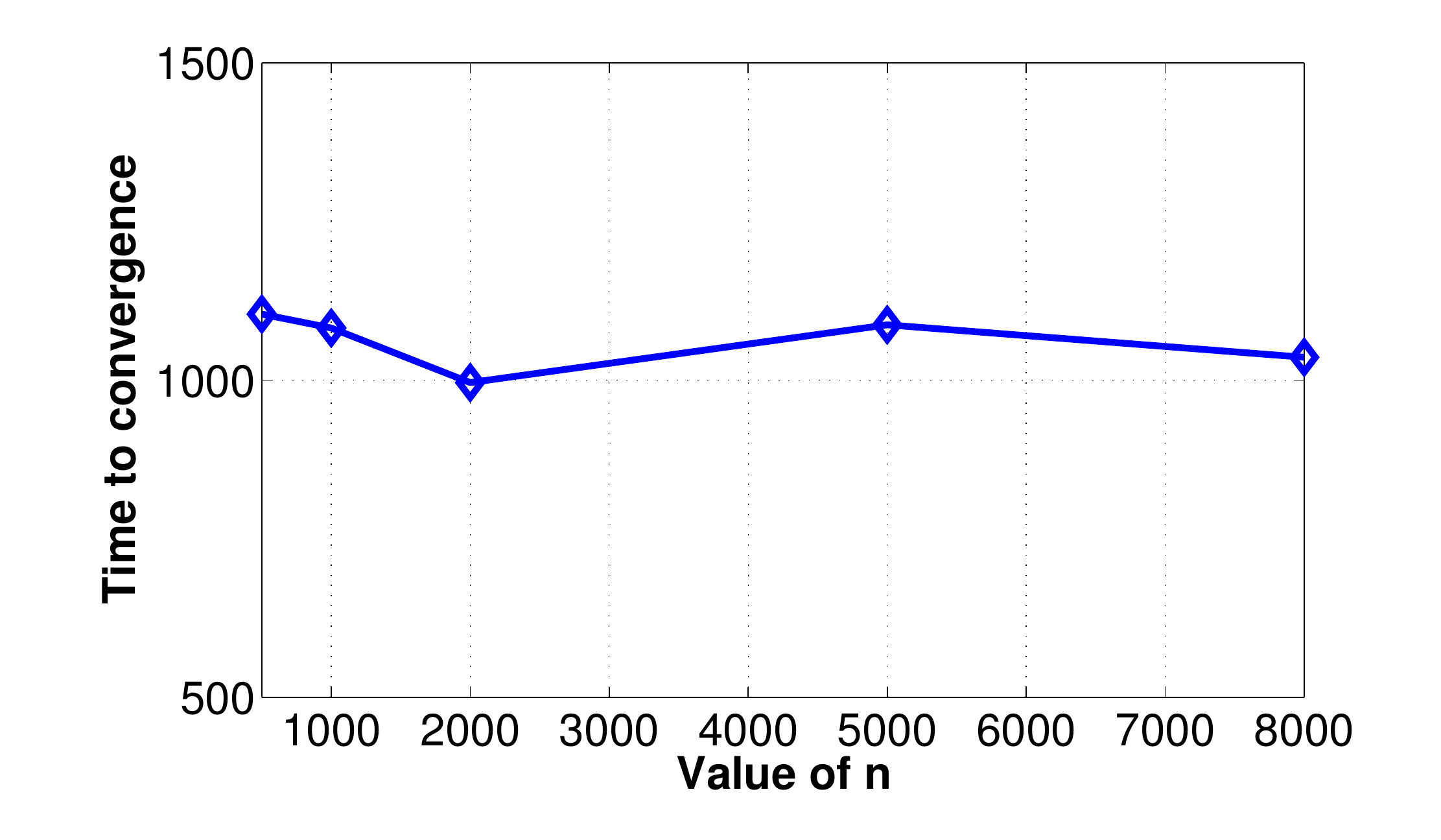}
    \label{fig:ntimeconv2}}
    \caption{Routing \& Congestion control: Time varying utilities \& Impact of $n$ on convergence.}
\end{figure}

\section{Conclusions}
\label{sec:conclusions} In this paper we presented, analyzed and
evaluated a framework of NUM for performing routing, scheduling,
congestion and power control under stochastic possibly
non-stationary LTF or STF wireless channels modeled by SDEs. The
continuous stochastic non-stationary wireless channels along with
the consideration of transient phenomena lead to a problem
formulation that can also tackle non-convex and time-varying
objective functions in an optimal way. Power control aims at
increasing users' experience allowing for higher source rates while
also improving the energy efficiency. In the case of LTF, we prove
that higher values of the diffusion coefficient of the power loss
lead to higher optimal users' utilities, a fact that cannot be
captured by the conventional NUM problem's formulation. Numerical
results evaluate the latter along with the convergence properties of
our proposed algorithms and the effect of diverse parameters on it.
The efficiency of power control and the conditions under which an
online implementation of the proposed approach is possible are also
investigated. Finally, our proposed NUM-based framework may
constitute a core for devising efficient cross-layer algorithms for
the network operation that incorporate transient or non-stationary
phenomena.

\begin{small}
\section{Acknowledgment} This research is co-financed by the European
Union (European Social Fund) and Hellenic national funds through the
Operational Program 'Education and Lifelong Learning' (NSRF
2007-2013). (under ``ARISTEIA" 1260). M.L. acknowledges support from
the NSRF Research Funding Program Thales: Optimal Management of
Dynamical Systems of the Economy and the Environment MIS375586.
\end{small}

\appendices

\bibliographystyle{plain}
\bibliography{references}

\end{document}